\documentclass{lmcs}
\pdfoutput=1

\usepackage{lastpage}

\lmcsheading{}{1--\pageref{LastPage}}{}{}%
{May~31,~2018}{Jul.~31,~2018}{}

\usepackage[utf8]{inputenc}
\usepackage[T1]{fontenc}
\usepackage[english]{babel}
\usepackage{etex}
\usepackage{algorithm}
\usepackage{algorithmic}
\usepackage{xcolor}
\usepackage{multicol} 
\usepackage{mathtools}
\usepackage{amssymb}
\usepackage{microtype}
\usepackage{csquotes}
\usepackage{xspace}
\usepackage{hyperref}
\hypersetup{hidelinks}
\usepackage{relsize}
\usepackage{fnpct}
\usepackage{tikz}
\usetikzlibrary{calc,arrows,matrix,positioning,fit,decorations.markings}
\newcommand{\ptext}{$\mathrel{\mathsmaller\bullet}$\xspace} %
\newcommand{\symbf}{\mathtt}				%
\newcommand{\extinflist}[1]{#1,\,\dots}						%
\newcommand{\lcalc}{$\lambda$-calculus\xspace}
\newcommand{\langprog}[1]{\textsc{#1}}
\newcommand{\tpower}[1]{^{\otimes #1}}
\newcommand{\ptensor}{\!\mathrel{\dot{\otimes}}\!}
\newcommand{\unitalg}{\C I}
\newcommand{\unit}{I}

\newcommand{\comp}{\TT{Comp}}

\newcommand{\rgt}{\TT{r}}
\newcommand{\lft}{\TT{l}}
\newcommand{\set}[2]{\left\{#1 \ \middle|\ #2\right\}}

\newcommand{\lang}{\mathcal{L}}

\newcommand{\closedh}{\BB T}
\newcommand{\ualg}{\C U}

\newcommand{\start}{\star}
\newcommand{\IO}{\TT{LR}}
\newcommand{\p}{\!\mathrel{\textstyle\mathsmaller\bullet}\!}

\newcommand{\concrete}{+}

\newcommand{\move}{\texttt{MOVE}\xspace}
\newcommand{\swap}{\texttt{SWAP}\xspace}

\newcommand{\C}[1]{\mathcal{#1}}
\newcommand{\F}[1]{\mathfrak{#1}}
\newcommand{\BB}[1]{\mathbb{#1}}

\newcommand{\TT}[1]{\mathtt{#1}}

\newcommand{\void}{\varnothing}

\newcommand{\flow}{\leftharpoonup}

\newcommand{\sflow}{\leftrightharpoons}

\newcommand{\trans}[2]{\xrightarrow[#2]{#1}}

\newcommand{\lobs}{$\F S$-observation\xspace}

\definecolor{bluegray}{rgb}{0.4, 0.6, 0.8}

\newcommand{\compclass}[1]{\textsc{#1}} %

\newcommand{\Logspace}{\compclass{Log\-space}\xspace}						%
\newcommand{\NLogspace}{\compclass{NLog\-space}\xspace}				%
\newcommand{\NorDLogspace}{\compclass{(N)Log\-space}\xspace}
\newcommand{\coNLogspace}{\compclass{co-NLog\-space}\xspace}	%
\newcommand{\Ptime}{\compclass{Ptime}\xspace}								%

\DeclareMathOperator{\Id}{Id}

\DeclareMathOperator{\var}{Var}

\newcommand{\Lcalc}{{\hbox{$\lambda$-cal}culus}\xspace}

\newcommand{\Lterms}{$\lambda$-terms\xspace}
\newcommand{\GofI}{geometry of interaction\xspace}

\newcommand{\locit}[1]{#1} %
\newcommand{\eg}{\locit{e.g.}~}
\newcommand{\ie}{\locit{i.e.}~}
\newcommand{\via}{\locit{via}~}

\newcommand{\etc}{\locit{etc.}\xspace}

\newcommand{\incise}[1]{---#1---} %
\mathsurround=1pt %
\setlength{\parskip}{3pt} %
\hyphenpenalty=1000 %
\tolerance=400 %

\begin{document}

\title{Unification and Logarithmic Space}
\author[C. Aubert]{Clément Aubert}
\address{School of Cyber and Computer Sciences, Augusta University, Augusta, GA, USA}
\email{clement.aubert@math.cnrs.fr}
\urladdr{\url{https://aubert.perso.math.cnrs.fr/}}
\thanks{This work was partly supported by the ANR-11-INSE-0007 REVER, the ANR-11-BS02-0010 Récré and the ANR-10-BLAN-0213 Logoi.}

\author[M. Bagnol]{Marc Bagnol}
\address{Laboratoire d'Informatique du Parallélisme, ENS Lyon, France}
\email{marc.bagnol@ens-lyon.fr}
\urladdr{\url{http://www.normalesup.org/~bagnol/}}
%


\begin{abstract}
	We present an algebraic characterization of the complexity classes \Logspace and \NLogspace, using an algebra with a composition law based on unification.
	This new bridge between unification and complexity classes is rooted in proof theory and more specifically linear logic and \GofI.
	
	We show how to build a model of computation in the unification algebra
	and then, by means of a syntactic representation
	of finite permutations in the algebra,
	we prove that whether an observation (the algebraic counterpart of a program) 
	accepts a word can be decided within logarithmic space.
	Finally, we show that the construction naturally corresponds to pointer machines, a convenient
	way of understanding logarithmic space computation.
\end{abstract}

\keywords{Implicit Complexity, Unification, Logarithmic Space, Proof Theory, Pointer Machines, Geometry of Interaction.}
\subjclass{F.1.3 Complexity Measures and Classes, F.4.1 Mathematical
	Logic.
	\\
	\textit{\href{http://www.acm.org/about/class/class/2012}{2012 ACM Subject Classification}}:
	[\textbf{Theory of computation}]
	Models of computation — Abstract machines;
	Computational complexity and cryptography — Complexity theory and logic;
	Logic — Proof theory;
	Semantics and reasoning — Program semantics — Algebraic semantics.
}
\titlecomment{This is an extended version of a preceding work~\cite{Aubert2014}.}
\maketitle
 
\section*{Introduction}
\subsection*{Proof Theory and Implicit Complexity Theory}
Complexity theory classifies the difficulty of problems by studying the asymptotic bounds on the 
resources (time, memory, processors, etc.) needed by a model of computation to run a program that solves them.
It was originally dependent on models of computation 
(Turing machines, random access machines, Boolean circuits, etc.), associated to a reasonable cost-model, that ran the implementation of an algorithm, a program.
The aim of implicit computational complexity (ICC) theory is to abstract away the specificities of 
hardware by focusing on the way programs are written.
For instance, weaker recursion schemata~\cite{Bellantoni1992a}, stratified 
recurrence~\cite{Leivant1993}, or quasi-interpretation~\cite{Bonfante2011} restrict expressivity
\via internal limitations on programming languages or function algebras rather than on available resources.

There is a longstanding tradition of relating proof theory (more specifically linear logic~\cite{Girard1987}) and implicit complexity theory, thanks to the Curry-Howard\incise{or \emph{proofs as programs}}correspondence.
Indeed, mathematical proofs and typed programs, both endowed with an evaluation mechanism (respectively cut-elimination and execution), are viewed as isomorphic, so that restrictions on the former translate seamlessly into limitations on the latter.
Fragments of linear logic\incise{bounded~\cite{Girard1992,Lago2010}, elementary~\cite{Danos2003}, light~\cite{Girard1995} or stratified~\cite{Schopp2007,Baillot2010}, to name a few}were proven to characterize complexity classes.
By removing or restricting rules of derivation, one excludes proofs and henceforth algorithms: the class of programs accepted can then be proven to (extensionally) correspond to functions of a certain complexity class.
In these restricted logics, the cut-elimination procedure\incise{which represents execution of programs
as rewriting of proofs}is simpler, and problems that are undecidable in general (such as termination of computation)
can become of a manageable complexity.

\subsection*{Geometry of Interaction}
The study of cut-elimination has grown to a central topic in proof theory and as a consequence
its mathematical modelling became of great interest.
The \GofI~\cite{Girard1989b} research program led to mathematical models of cut-elimination in terms of paths~\cite{Asperti1994}, token machines~\cite{Laurent2001}, operators algebras~\cite{Girard1989a,Girard2011a,Seiller2018}
or graphs~\cite{Danos1990,Seiller2012b,Seiller2012a}.
The general perspective is to consider untyped objects modelling untyped programs
and to represent algebraically cut-elimination.

This approach was already used with complexity concerns~\cite{Baillot2001,Girard2012,Aubert2016IaC, Aubert2016mscs}\footnote{For a more advanced discussion on the \enquote{sister approaches} relying on the theory of von Neumann algebras~\cite{Girard2012,Aubert2016mscs,Aubert2016IaC} one should refer to the \enquote{related work} section, page \pageref{related-works}.}.
It differs from usual ICC \via proof theory because the restrictions on the expressivity of programs is not obtained through restrictions on type systems.
Instead, limitations imposed on the objects representing proofs rule out computational principles 
on the semantics side and allow to capture complexity classes.
This enables the use of methods coming from all areas of mathematics: for instance, an action of 
the group of permutations on an unbounded tensor product will provide us with our basic computational principle.

\subsection*{Unification} 
Unification is one of the key-concepts of theoretical computer science, for it is used in logic programming and is a classical subject of study for complexity theory.
Its different names and variants\incise{unification, matching, resolution rule, etc.}always comes down to the same question: is there a substitution to make two first-order terms equal?
It is an interesting mechanism of computation that can be seen as more primitive than other evaluation procedures
such as the $\beta$-reduction of \Lcalc.

The resolution rule of logic programming~\cite{Robinson1965} serves as a basis for a more syntactical version of \GofI~\cite{Girard1995a,Girard2013}, where cut-elimination is represented as 
iterated matching in a \emph{unification algebra}~\cite{Bagnol2014}.
In this setting, proofs are represented as clauses (or \enquote{flows}), which have a natural notion of size, height, \etc and relate closely to the study of complexity of
logic programming~\cite{Dantsin2001}.
This is an intuitive framework, yet expressive enough for our purposes.

\subsection*{Contribution and Outline of the Article}
We carry on the methodology of bridging \GofI and complexity theory with this renewed approach.
It relies on a simple representation of execution in a unification-based algebra, defined in Section~\ref{sec_unification}, that is shown to represent some algebraic structures syntactically.

In Section~\ref{sec_words}, we present the framework where computation takes place and show how to represent data and programs.
Inputs are considered to be words over a finite alphabet, encoded thanks to the classical Church
representation of lists (Section~\ref{sec_words_and_obs}).
This raises a question about invariance up to different representations of the same input, addressed in Section~\ref{sec_normativity}.

This construction is finally specialized in Section~\ref{sec_logspace} to a subalgebra
relying on a representation of permutations in the unification algebra.
The soundness of the construction with respect to logarithmic space computation (both
deterministic and non-deterministic) is proven thanks to a procedure deciding the outcome of the interaction of the representation of a program with the representation of a data.
Observations are the algebraic counterpart of programs, and they are shown in Section~\ref{subsec_completness} to correspond to a natural notion of read-only Turing machines: pointer machines. In that perspective the
algebraic notion of isometricity will correspond to reversibility of computation.%

\section{The Unification Algebra}
\label{sec_unification}
\subsection{Unification}
Unification can be thought of as the study of formal solving of equations between terms.
This topic was introduced by Herbrand~\cite{Herbrand1930}, but became really widespread after the work of J.~A.~Robinson~\cite{Robinson1965}
on automated theorem proving.
The unification technique is also at the core of the logic programming language \langprog{Prolog}
and type inference for functional programming languages such as \langprog{CaML} and \langprog{Haskell}.

\begin{nota}
	We consider first-order terms, written $\extinflist{t, u, v}$, built from variables, noted in italics font (\eg $x,y$),
	and function symbols with an assigned finite arity, written in typewriter font
	(\eg $\symbf c$, $\symbf f(\cdot)$, $\symbf g(\cdot,\cdot)$).
	Symbols of arity $0$ will be called \emph{constants}.
	
	Sets of variables and of function symbols of any arity are supposed infinite.
	We distinguish a binary function symbol \ptext (in infix notation) and a constant symbol $\start$.
We will omit the parentheses for \ptext and write $t \p u\p v$ for $t\p(u\p v)$.	
	
	We write $\var(t)$ the set of variables occurring in the term $t$ and say that $t$ is \emph{closed} if $\var(t)=\void$. %
We will write $\theta t$ the result of applying in parallel the substitution $\theta$, written $\{t_1 \mapsto u_1 ; t_2 \mapsto u_2 ; \hdots\}$, to the term $t$.
\end{nota}

\begin{defi}[renaming and instance]
\label{renaming}
A \emph{renaming} is a substitution $\alpha$ that bijectively maps variables to variables.
A term $t'$ is a \emph{renaming} of $t$ if $t'=\alpha t$ for some renaming $\alpha$.
Two substitutions $\theta$, $\psi$ are equal \emph{up to renaming} if there is a renaming
$\alpha$ such that $\psi=\alpha\theta$.

A substitution $\psi$ is an \emph{instance} of $\theta$ if there is a substitution
$\sigma$ such that $\psi=\sigma\theta$.
\end{defi}

\begin{exas}
Let
\[\alpha = \{x \mapsto y ; y \mapsto x\} \quad \theta = \{x \mapsto \symbf c ; z \mapsto \symbf g(\start)\} \quad \psi = \{y \mapsto \symbf c ; z \mapsto \symbf g(\start)\}\]
be three substitutions and 
\[t = \symbf f(x) \p z \quad t' = \symbf f(y) \p z\]
be two terms.
Then $\alpha$ is a renaming, $t'$ is a renaming of $t$, and $\theta$ and $\psi$ are equal up to renaming.
As $\var(t) = \{x, z\}$, $\theta t = \symbf f(\symbf c) \p \symbf g(\start)$ is a closed term.
\end{exas}

\begin{defi}[unification]
\label{def_unification}
Two terms $t,u$ are \emph{unifiable} if there is a substitution $\theta$ such
that $\theta t=\theta u$.
We say that $\theta$ is a \emph{most general unifier (MGU) of $t$, $u$} if any other unifier of $t$, $u$ is an instance of $\theta$.
\end{defi}

\begin{rem}
\label{renaming-MGU}
It is easy to check that any two MGU of a pair of terms are equal up to renaming.
\end{rem}

We will be interested mostly in the weaker variant of unification where one can first perform
renamings on terms to make their variables distinct.
We therefore introduce a specific vocabulary for it.

\begin{defi}[disjointness and matching]
\label{disjoint}
Two terms $t$, $u$ are \emph{matchable} if $t'$, $u'$ are unifiable, where $t'$, $u'$ are renamings (Definition~\ref{renaming}) of $t$, $u$ such that $\var(t')\cap\var(u')=\varnothing$.

If two terms are not matchable, they are said to be \emph{disjoint}.
\end{defi}

\begin{exa}
The terms $x$ and $\symbf c \p x$ are not unifiable,
but they are matchable, as a renaming of $x$, for instance $\alpha x = \{x\mapsto y;y\mapsto x\}x=y$, is unifiable with $\symbf c \p x$.
\end{exa}

A fundamental result on first-order unification is the (decidable) existence of most general unifiers
in cases where the unification problem has a solution.

\begin{prop}[MGU]
If $t$ and $u$ are unifiable, they have a MGU.
Whether two terms are unifiable and, in case they are, finding a MGU is a decidable problem.
\end{prop}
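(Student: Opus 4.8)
The plan is to exhibit an algorithm that, run on $t$ and $u$, either reports failure or returns a substitution, and to prove simultaneously that it always terminates, that it fails exactly when $t$ and $u$ admit no unifier, and that on success its output is a MGU. This settles decidability and existence in one stroke; uniqueness up to renaming is already recorded in Remark~\ref{renaming-MGU}.

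First I would generalize from a single pair to a finite system of equations $E=\{s_1=r_1,\dots,s_k=r_k\}$, calling a substitution $\theta$ a \emph{solution} of $E$ when $\theta s_i=\theta r_i$ for all $i$; the original problem is the singleton $\{t=u\}$. Say $E$ is \emph{solved} if it has the form $\{x_1=v_1,\dots,x_n=v_n\}$ with the $x_j$ pairwise distinct variables, none occurring in any $v_{j'}$. For such an $E$ the substitution $\theta_E=\{x_1\mapsto v_1;\dots;x_n\mapsto v_n\}$ is plainly a solution, and any solution $\psi$ factors as $\psi=\psi\theta_E$ (because $\psi x_j=\psi v_j$ and $\theta_E$ fixes every other variable), so $\theta_E$ is a MGU of $E$. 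It therefore suffices to rewrite $\{t=u\}$ into an equivalent solved system, or to detect that no solution exists.

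Next I would apply the standard simplification rules to $E$: \emph{decomposition} replaces $\symbf f(a_1,\dots,a_m)=\symbf f(b_1,\dots,b_m)$ by the equations $a_1=b_1,\dots,a_m=b_m$; a \emph{clash} of distinct head symbols or arities reports failure; \emph{deletion} discards a trivial $x=x$; \emph{orientation} turns $t=x$ into $x=t$ when $t$ is not a variable; and \emph{elimination} takes an equation $x=v$ with $x\notin\var(v)$ whose variable $x$ still occurs elsewhere and applies $\{x\mapsto v\}$ to the rest of $E$, while the \emph{occurs-check} failure fires on $x=v$ with $v\neq x$ and $x\in\var(v)$. The correctness invariant to establish is that every non-failing rule sends $E$ to some $E'$ with exactly the same solution set, and that each failing rule fires only when $E$ has no solution at all; these checks are local, the occurs-check case resting on the impossibility of $\theta x=\theta v$ when $x$ is a proper subterm of $v$, by comparing symbol counts in $\theta x$ and $\theta v$.

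The hard part will be termination, i.e. ruling out infinite rewriting: decomposition shrinks individual terms but multiplies equations, and elimination removes a variable yet may duplicate $v$ and so enlarge the total size of $E$. I would therefore order systems lexicographically by the triple consisting of the number of distinct variables still occurring unsolved in $E$, then the total number of symbols in $E$, then the number of equations whose left-hand side is not a lone variable, and verify that each rule strictly decreases this well-founded measure: elimination lowers the first component, decomposition and deletion lower the second while preserving the first, and orientation lowers the third while preserving the first two. Well-foundedness forces the run to halt, either in failure — whence no solution exists and $t,u$ are not unifiable — or in a solved system, whose associated substitution is the required MGU. Every step being effective and every run finite, the procedure is an algorithm, yielding decidability.
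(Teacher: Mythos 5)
Your proof is correct, but it cannot be compared step-by-step with the paper's, because the paper gives none: this proposition is stated as a classical fact of first-order unification, with the surrounding text deferring to the literature (Herbrand, Robinson, and the survey of Knight cited nearby) rather than proving it. What you have written is the standard equation-rewriting argument of Herbrand and Martelli--Montanari: generalize to finite systems, rewrite by decomposition, clash, deletion, orientation and elimination with occurs-check, show each rule preserves the solution set (or fires only on unsolvable systems), and enforce termination by a lexicographic measure. Two details deserve tightening. First, your claim that deletion and orientation \emph{preserve} the earlier components of the measure is slightly imprecise: deleting $x=x$ when $x$ occurs nowhere else, or orienting $t=x$ when this solves $x$, can also strictly decrease the first component; this is harmless, since a lexicographic decrease only requires that no earlier component increase, but it should be stated that way. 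Second, to conclude that a non-failing halt yields a \emph{solved} system you need the (easy but unstated) observation that any system which is neither solved nor in a failure configuration admits at least one applicable rule; without it, ``no rule applies'' and ``solved'' are not yet synonymous, and the extraction of $\theta_E$ is not licensed. With these folded in, your argument is a complete and self-contained proof of a statement the paper takes off the shelf --- and it proves more than the paper ever uses, since the computational workhorse there is the far simpler problem of matching a term against a \emph{closed} term (Theorem~\ref{thm-unif-logspace}), for which the occurs-check and variable-elimination machinery largely trivializes.
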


As unification grew in importance, the study of its complexity gained in attention.
A complete survey~\cite{Knight1989} tells the story of the bounds getting sharpened: general first-order
unification was finally proved~\cite{Dwork1984} to be a \Ptime-complete problem.

In this article, we will consider a much simpler case of the problem: matching a term against a closed term, which has been shown to be tractable within deterministic logarithmic space.

\begin{thm}[Matching is in \Logspace \protect{\cite[p.~49]{Dwork1984}}]
\label{thm-unif-logspace}
Deciding if two terms $t$, $u$ with $u$ closed are unifiable and, if so, producing their MGU, is in \Logspace.
\end{thm}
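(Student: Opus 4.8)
The plan is to exploit the hypothesis that $u$ is closed: a unifier of $t$ and $u$ is exactly a substitution $\theta$ with $\theta t = u$ (since $\theta u = u$ when $u$ is closed), so the problem reduces to \emph{matching} the pattern $t$ against the ground term $u$. Closedness is precisely what makes the problem tractable in small space: a variable of $t$ is bound to the subterm of $u$ sitting at the corresponding position, and because $u$ is ground there is neither an occurs-check to perform nor a substitution to propagate back into $t$ — the very features that make general first-order unification \Ptime-complete. The \MGU, when it exists, will therefore be $\{x \mapsto u|_{p} \mid x \in \var(t)\}$, where $p$ is a position at which $x$ occurs in $t$ and $u|_{p}$ is the subterm of $u$ at the matching position.

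First I would fix a linear (prefix) representation of terms and record one navigation primitive on it. Writing a term as the left-to-right list of its symbols, the boundaries of any subterm are found by a single scan maintaining a counter, set to $1$ at the root of the subterm and incremented by $k-1$ at each symbol of arity $k$, the subterm ending at the first position where the counter drops to $0$; hence, from an index into the representation one locates the end of the subterm rooted there using only pointers and counters of size $\bigO(\log n)$. This lets two pointers — one into $t$, one into $u$ — realise a \emph{synchronised traversal}: scanning $t$ from left to right, whenever $t$ presents a function symbol I check that $u$ carries the same symbol at the companion pointer and advance both by one, and whenever $t$ presents a variable I leave the companion subterm of $u$ untouched and skip the pointer in $u$ past that whole subterm (delimited by the counter above). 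This single pass rejects the structurally incompatible cases and, for every occurrence of a variable in $t$, exposes the position in $u$ of the subterm it must be bound to, all within logarithmic space.

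What remains is to enforce \emph{consistency}: if a variable $x$ occurs at several positions of $t$, the corresponding subterms of $u$ must coincide. Since storing these subterms (whose total size may be linear in $u$) is not permitted, I would proceed by recomputation. Iterating over the variable occurrences in $t$, for each one I recompute from scratch the synchronised traversal up to that occurrence to recover the matching pointer into $u$, and compare the subterm found there with the one attached to the first occurrence of the same variable; equality of two subterms of $u$ is the comparison of two of its substrings, done with two pointers and a counter. Deciding unifiability then amounts to the success of the structural pass together with all consistency checks. To \emph{produce} the \MGU I would, for each variable occurring in $t$ for the first time, copy the corresponding subterm of $u$ directly onto the write-only (hence uncounted) output tape, which is again a logspace substring copy.

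The one point demanding care is the space budget, which is exactly why this is not a one-line structural induction: the naive recursive matching algorithm keeps a stack of depth the height of $t$ and accumulates a substitution of size up to that of $u$, both potentially linear. The remedy is to trade storage for recomputation — the synchronised traversal is stackless because the prefix representation lets subterm boundaries be recovered on the fly by the arity counter, and the consistency phase re-derives each needed pointer into $u$ instead of remembering it. Since each phase uses only finitely many pointers and counters of size $\bigO(\log n)$, and since composing logspace procedures by recomputation again yields a logspace procedure, the whole decision-and-production task stays within \Logspace.
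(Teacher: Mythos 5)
Your proposal is correct, but there is nothing internal to compare it against: the paper does not prove this theorem at all, it imports it wholesale from Dwork, Kanellakis and Mitchell~\cite[p.~49]{Dwork1984}, and the discussion following the statement only remarks that a linear-space matching routine would in fact suffice for the later soundness argument. So your attempt should be judged against the cited literature, and there it stands up: it is, in substance, a sound reconstruction of the standard proof. The two pillars are the right ones. First, closedness of $u$ trivializes the algebraic side: any unifier $\theta$ must satisfy $\theta t = u$, so each $x\in\var(t)$ is forced to the subterm $u|_{p}$ at any position $p$ where $x$ occurs, there is no occurs-check and no propagation of bindings back into $t$ (exactly the features responsible for \Ptime-completeness of general unification), and the forced substitution is indeed most general --- a point you assert rather than verify, but it is immediate: any unifier $\psi$ agrees with $\theta$ on $\var(t)$, and since the values $u|_{p}$ are closed one gets $\psi = \psi\theta$, exhibiting $\psi$ as an instance of $\theta$. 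Second, the space accounting is handled correctly: the synchronized scan of prefix representations with subterm boundaries recovered on the fly by the arity counter (instead of a recursion stack), consistency of repeated variables enforced by recomputing pointers into $u$ (instead of storing bindings), and the MGU emitted onto the write-only output tape are all standard logspace devices, and at any moment you keep only constantly many pointers and counters of size $\bigO(\log n)$. The only points you gloss over are representational --- recovering arities from a parenthesized rather than Polish input syntax, rejecting when one scan terminates prematurely, comparing variable names --- and these are harmless logspace details. In short: a correct, self-contained proof of a statement the paper itself only cites.
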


This result is stronger than what we will eventually need: in the proof of Theorem~\ref{soundness}, 
where we will unify what will be called a flow against a closed terms, we will need to run the matching operation 
within linear space to keep the whole algorithm in logarithmic space.

\subsection{Flows and Wirings}
We now design an algebra with a product based on unification.
Let us start by setting up a monoid with a partially defined product, which will be the basis of the construction.

\begin{defi}[flows]
\label{def_flow}
A \emph{flow} is an oriented pair of first-order terms $t\flow u$ such that $\var(t)=\var(u)$.

Flows are considered up to renaming: for any renaming $\alpha$,
$t\flow u=\alpha t\flow \alpha u$.

We set $\unit:=x\flow x$ and $(t\flow u)^{\dagger}:=u\flow t$, so that $(.)^{\dagger}$ is an involution.%
\end{defi}

A flow $t\flow u$ can be thought of as a \texttt{match…with u -> t} in a ML-style language
or as a specific kind of Horn clause\footnote{The precise connections with logic programming, that comes with a relaxed definition of flows, were subsequently exposed~\cite{Aubert2014b,%
Aubert2016fossacs%
}.}.
The composition of flows follows this intuition: it is an instance of the resolution rule of
logic programming.

\begin{defi}[product of flows]
Let $u\flow v$ and $t\flow w$ be two flows.
Suppose we have chosen two representatives of the renaming classes such that their sets of variables are disjoint.
The \emph{product} of $u\flow v$ and $t\flow w$ is defined if $v,t$ are unifiable with MGU $\theta$
(the choice of a MGU does not matter because of Remark~\ref{renaming-MGU}) and in that case:
\[(u\flow v)(t\flow w):=\theta u \flow \theta w\]
\end{defi}

\begin{defi}[action on closed terms]
\label{flow-action}If $t$ is a closed term, $(u\flow v)(t)$ is defined whenever $v$ and $t$ are unifiable, with MGU $\theta$, in that case $(u\flow v)(t):=\theta u$
\end{defi}

\begin{exas}
Composition of flows: $(x\p \symbf c \flow (\symbf c\p\symbf c) \p x)(y\p z\flow z\p y)= x\p\symbf c \flow x\p\symbf c\p\symbf c$.

Action on a closed term: $(x\p\symbf c \flow x\p\symbf c\p\symbf c)(\symbf d\p\symbf c\p\symbf c)=\symbf d\p\symbf c$.
\end{exas}

\begin{rem}
The condition on variables ensures that the result of an action on a closed term is a closed term, because $\var(u)\subseteq\var(v)$, and that the action is injective on its definition domain, because $\var(v)\subseteq\var(u)$.

Moreover, the action is compatible with the product of flows: for $l$ and $k$ two flows and $t$ a term, $l(k(t))=(lk)(t)$ and both are defined at the same time.
\end{rem}

By adding a formal element $\bot$ (representing the failure of unification) to the set of flows, one could turn the product into a completely defined operation, making the set of flows an \emph{inverse monoid}.
However, we will need to consider the wider algebra of \emph{sums} of flows that is easily defined directly from the partially defined product.
An analogy can give an insight on this need:
when considering logic programs, one wants to manipulate \emph{set of clauses} and not only a single clause.
All the same, we want here to compute thanks to sets of flows, formally represented as sums of flows in our algebra, although the coefficient will play a minor role (we will essentially take them all to be $1$, as detailed in Definition~\ref{concrete}).

We therefore now lift the structure to a $\ast$-algebra by considering formal sums of flows with complex
coefficients and extending all our operations by linearity.

\begin{defi}[wirings, unification algebra]
\label{wir-and-unif-alg}
\emph{Wirings} are $\BB C$-linear combinations of flows
endowed with the following operations (with
$\lambda_i, \mu_j \in \BB C$, $l_i, k_j$ two flows
and $\overline\lambda$ the complex conjugate of $\lambda$):
\begin{align*}
\bigg(\sum_i \lambda_il_i\bigg) \bigg(\sum_j \mu_jk_j\bigg) & :=
\sum_{\mathclap{\substack{i,j \text{ such that} \\ (l_ik_j)\text{is defined}}}}\lambda_i\mu_j(l_ik_j)\\
\bigg(\sum_i \lambda_il_i\bigg)^{\dagger} & :=\sum_i \overline\lambda_il_i^{\dagger}
\end{align*}

We write $\ualg$ the set of wirings and refer to it as the \emph{unification algebra}.
\end{defi}

\begin{rem}
Indeed, $\ualg$ is a unital $\ast$-algebra: it is a $\BB C$-algebra, considering the product defined above, with an involution $(.)^{\dagger}$ and a unit $\unit$ (Definition~\ref{def_flow}).
\end{rem}

Let us note here that the full $\BB C$-linear framework is not strictly necessary to obtain our complexity results, but is kept to stay in line with previous work on this topic and does not overly complexify the presentation.
An account of this work where complex coefficients have been dropped and the linear sum is replaced by union of sets can be found in the second author's PhD thesis~\cite{Bagnol2014}.

To study computation and concrete programs (which do not involve complex coefficients)
we need to restrict the large algebraic framework defined and to consider wirings whose only
coefficient is $1$.

\begin{defi}[concrete wirings]
\label{concrete}
A wiring is \emph{concrete} whenever it is a sum of flows with all coefficients equal to~$1$.
Given a set of wirings $E$ we write $E^+$ the set of all concrete wirings of $E$, and will omit to write the coefficients.
\end{defi}

We can then consider further the notion of \emph{isometric wiring}.
As they act on closed terms as a partial injection (Lemma~\ref{lem_isom}), they can be considered as behaving in a reversible way.
On the other hand, they satisfy the algebraic property of partial isometries, that is
$WW^\dagger W=W$.

\begin{defi}[isometric wiring]
\label{piso}
A concrete wiring $\sum_i u_i \flow t_i$ is \emph{isometric}
if the $u_i$ are pairwise disjoint (Definition~\ref{disjoint}) and the $t_i$ are pairwise disjoint.
\end{defi}

\begin{exa}
The sum of flows $(\symbf c\p x \flow x\p\symbf d)+(\symbf d\p\symbf c\flow \symbf c\p \symbf c)$ is
an isometric wiring.
Note that a wiring containing a single flow will always be a partial isometry.
\end{exa}

It will be convenient to consider the action of wirings on closed terms, making them linear operators
on the vector space spawned by closed terms.
We therefore extend the definition of action on closed terms (Definition~\ref{flow-action}) to wirings.

\begin{defi}[$\closedh$, action on closed terms]
\label{act_vect}
Let $\closedh$ be the free $\BB C$-vector space spawned by closed terms.
Wirings act on base vectors of $\closedh$ in the following way:

\[\bigg(\sum_i \lambda_il_i\bigg)(t) :=\!\sum_{\mathclap{\substack{i \text{ such that } \\ l_i(t)\text{ is defined }}}}\lambda_i \big(l_i(t)\big) \ \ \in \closedh\]
which extends by linearity into an action on the whole $\closedh$.
\end{defi}

\begin{lem}[isometric action]
\label{lem_isom}
Let $F$ be an isometric wiring and $t$ a closed term.
We have that $F(t)$ and $F^{\dagger}(t)$ are either $0$ or another closed term $t'$ (seen as an element of $\closedh$).
It follows that any isometric wiring induces a partial injection on the set of terms.
\end{lem}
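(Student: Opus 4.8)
The plan is to write the isometric wiring as $F=\sum_i u_i\flow t_i$ and to exploit the two pairwise-disjointness conditions (Definition~\ref{piso}) separately: the disjointness of the $t_i$ will control where $F$ is \emph{defined}, while the disjointness of the $u_i$ will yield \emph{injectivity}. First I would unfold the action (Definition~\ref{act_vect}): $F(t)$ is the sum, over those indices $i$ for which $t_i$ unifies with the closed term $t$, of the terms $(u_i\flow t_i)(t)$. That each such summand is \emph{closed} is immediate from the flow condition $\var(u_i)=\var(t_i)$, exactly as noted after Definition~\ref{flow-action}.

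The crux is to show that \emph{at most one} index contributes, so that $F(t)$ collapses to $0$ or to a single closed term. Suppose two distinct indices $i\neq j$ had $t_i$ and $t_j$ both unifiable with $t$. Choosing representatives of the two flows with disjoint variables, and using that $t$ is closed, the corresponding matchers $\theta_i,\theta_j$ are supported on the disjoint sets $\var(t_i)$ and $\var(t_j)$, so their union $\theta$ satisfies $\theta t_i=t=\theta t_j$. This exhibits a unifier of variable-disjoint copies of $t_i$ and $t_j$, i.e. $t_i$ and $t_j$ are matchable, contradicting their disjointness (Definition~\ref{disjoint}). Hence $F(t)$ is $0$ or a closed term; the same conclusion holds for $F^{\dagger}=\sum_i t_i\flow u_i$, which is again isometric, since it merely swaps the roles of the $t_i$ and the $u_i$ (both pairwise disjoint).

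For the partial-injection claim, I would assume $F(t)=F(t')=s$ with $s\neq 0$ and recover the \emph{unique} indices $i,j$ defined on $t$ and $t'$ respectively, so that $\theta_i u_i=s=\theta_j u_j$. If $i=j$, then since the single flow $u_i\flow t_i$ acts injectively on its domain (because $\var(t_i)\subseteq\var(u_i)$, again as noted after Definition~\ref{flow-action}), we get $t=t'$. If $i\neq j$, then $u_i$ and $u_j$ both unify with the closed term $s$, and the same gluing argument makes $u_i$ and $u_j$ matchable, contradicting the disjointness of the $u_i$. Thus $i=j$ and $t=t'$, so $F$ is injective on its domain, hence a partial injection on closed terms.

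The step I expect to be the only one requiring genuine care is the gluing argument: justifying that two independent matchers of a closed term against $t_i$ and against $t_j$ combine into a single substitution unifying $t_i$ and $t_j$. Everything hinges on first passing to variable-disjoint representatives (so the supports of $\theta_i$ and $\theta_j$ do not clash) and on $t$ being closed (so no constraint propagates between the two sides); once this is set up, the contradiction with non-matchability is clean, and the remainder is routine bookkeeping with the flow variable conditions.
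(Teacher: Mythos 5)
Your proof is correct and takes essentially the same approach as the paper's: disjointness of the $t_i$ forces at most one summand of $F(t)$ to be defined on a closed term (giving $0$ or a single closed term, with the symmetric statement for $F^{\dagger}$), and disjointness of the $u_i$ yields injectivity. The paper states both facts tersely\;---\;obtaining injectivity by noting that $F^{\dagger}$ acts as a partial inverse\;---\;whereas you spell out the variable-disjoint gluing argument and the case analysis explicitly; the mathematical content is the same.
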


\begin{proof}
	A wiring $F$ is isometric if and only if $F^\dagger$ is so we can focus on $F(t)$: because
	$F=\sum_i u_i \flow t_i$ with the $t_i$ pairwise disjoint, then $t$ match at most one of the
	$t_i$ and therefore the sum $F(t)$ can contain at most one element, with coefficient $1$.
	
	Then the action of an isometric wiring on closed terms is a partial function, with a partial inverse
	given by the action of $F^\dagger$.
\end{proof}

\subsection{Tensor Product and Permutations}\label{permutation}
We now define the representation in the unification algebra $\ualg$ of structures that provide more expressivity.
Thanks to the notion of tensor product, we will build wirings and subalgebras that are
split into components computing independently. 
Unbounded tensor products will allow to represent a potentially unbounded number of data stores.
Finite permutations have a natural representation in the algebra that acts on the unbounded tensor
product, allowing to represent manipulation of these stores.

All this will provide enough room and computational principles to represent in Section~\ref{subsec_completness} a basic model of computation, with pointers and internal states.

\begin{defi}[tensor product]
\label{ptensor}
Let $u\flow v$ and $t\flow w$ be two flows.
Suppose we have chosen representatives of these renaming classes that have their sets of variables disjoint. We define their \emph{tensor product} as
$(u\flow v) \ptensor (t\flow w):= u\p t \flow v\p w$.
The operation is extended to wirings by bilinearity.

Given two $\ast$-algebras $\C A$, $\C B$, we define their tensor product as the $\ast$-algebra $\C A\ptensor \C B$
spawned by
\[\set{F\ptensor G}{F\in\C A,G\in \C B}\]
\end{defi}

This actually defines an embedding of the usual algebraic tensor product into $\ualg$, which means
in particular that $(F\ptensor G)(P\ptensor Q)=(FP)\ptensor(GQ)$.
As for \ptext, we will omit the parentheses for $\ptensor$ and write $\C A\ptensor\C B\ptensor\C C$ for $\C A\ptensor(\C B\ptensor\C C)$.

Once we have the basic tensor product as a building block, we can define the unbounded one by putting
together bigger and bigger tensor powers of the same $\ast$-algebra, with a variable in the end standing
for the fact that the size is not specified in advance.

\begin{defi}[unbounded tensor]
\label{unbounded}
Let $\C A$ be a $\ast$-algebra, we define the $\ast$-algebras $\C A\tpower n$ for all $n\in \BB N$ as
(letting $\unitalg:=\set{\lambda\unit}{\lambda\in \BB C}$,
with $\unit=x\flow x$ as in Definition~\ref{def_flow})
\[\C A\tpower 0:= \unitalg \quad\text{and}\quad \C A\tpower {n+1}:=\C A \ptensor \C A\tpower n\]
and the $\ast$-algebra $\C A\tpower \infty$ spawned by $\displaystyle\bigcup_{\mathclap{n\in \BB N}} \C A\tpower n\ $.
\end{defi}

We consider that finite permutations can be composed even when their domain of definition do not match,
and get a natural representation of them based on the binary function symbol \ptext.

\begin{defi}[representation]
To a permutation $\sigma \in \F S_n$ we associate the flow
\[[\sigma]:= x_1\p x_2\p\cdots\p x_n \p y\flow x_{\sigma(1)}\p x_{\sigma(2)}\p\cdots\p x_{\sigma(n)}\p y\]
\end{defi}

A permutation $\sigma\in\F S_n$ can act on the first $n$ components of the unbounded tensor product (Definition~\ref{unbounded}) by swapping them and leaving the rest unchanged.
The wirings $[\sigma]$ internalize this action: in the above definition, the variable $y$ at the end stands for the components that are not affected.

\begin{exa}
Let $\tau\in\F S_2$ be the permutation swapping the two elements of $\{1,2\}$ and $U_1\ptensor U_2\ptensor U_3\ptensor \unit \in \ualg\tpower 3\subseteq \ualg\tpower\infty$.

We have $[\tau]=x_1\p x_2\p y\flow x_2\p x_1\p y$ and $[\tau](U_1\ptensor U_2\ptensor U_3\ptensor \unit)[\tau]^{\dagger}=U_2\ptensor U_1\ptensor U_3\ptensor \unit$.
\end{exa}

In Section~\ref{sec_logspace}, we will consider the algebra spawned by these representations of permutations
as the basic components of logarithmic space programs in $\ualg$.

\begin{defi}[permutation algebra]\label{def_perm}
For $n\in \BB N$ we set $[\F S_n]:=\set{[\sigma]}{\sigma \in \F S_n}$ and $\C S_n$ as the $\ast$-algebra
spawned by $[\F S_n]$.

We define then the \emph{permutation algebra} $\C S$ as the $\ast$-algebra spawned by $\displaystyle\bigcup_{\mathclap{n\in \BB N}} \C S_n$.%
\end{defi}

\section{Words, Observations and Normativity}
\label{sec_words}
The resolution algebra $\ualg$ embeds its own mechanism of execution, unification, and we saw how permutations could be represented in it. %
This is the general environment where the rest of this work is going to take place.

At this stage, there is no distinction between data and programs, functions and inputs.
In this section, we single out two subsets of the algebra: one will represent data, the other corresponds to programs.
In Section~\ref{sec_normativity} we will see how to address through the notion of \emph{normativity} the
fact that many wirings can represent the same data in our algebraic view.
This will lead to the definition of
an acceptance predicate, based on nilpotency:

\begin{defi}[nilpotency]\label{def_nilp}
A wiring $F$ is \emph{nilpotent} if $F^n=0$ for some $n\in \BB N$.
\end{defi}

In the \GofI models
which are the intuitive starting point of this work, this corresponds to strong normalization,
\ie termination of computation.
Note that this makes the acceptance only semi-decidable in
general~\cite{Bagnol2014}:
one can always compute iterations of a wiring and eventually reach $0$, but there is no general algorithm
to decide if a wiring is \emph{never} going to reach $0$. However, we will consider in 
Section~\ref{subsec_soundness} a particularly simple kind of
wirings with an acceptance problem simplified to the point it becomes a logarithmic space problem.

We will consider words on alphabet as our data, although more complex datatypes could be represented
as long as they enjoy a representation in \lcalc, following the same pattern.

 \subsection{Representing Computation: Words and Observations}
\label{sec_words_and_obs}
The representation of words over an alphabet, seen here as a set of constant symbols, in the resolution algebra directly comes from the
translation of the representation of words in \lcalc (or in linear logic) and their
interpretation in \GofI models~\cite{Girard1989a,Girard1995a}.

This proof-theoretic origin is an useful guide for intuition, but we can give a direct
definition of the notion.

\begin{nota}
We fix distinguished constant symbols $\lft$, $\rgt$ and $\start$, with $\start\notin\Sigma$
	and we write $u \sflow v$ the sum $u\flow v+v\flow u$.
	We also let $\unitalg:=\set{\lambda\unit}{\lambda\in \BB C}$ as we did in Definition~\ref{unbounded}.
\end{nota}

\begin{defi}[word representation]\label{words}
From now on we suppose fixed a set $\TT P$ of constant symbols, the \emph{position constants}
and denote with $\TT c_i$ the symbols of the alphabet $\Sigma$.

Let $W=\TT c_1 \dots \TT c_n$ be a word over $\Sigma$ and $\TT p_0,\TT p_1,\dots,\TT p_n$ be distinct
position constants.
The \emph{representation} $W(\TT p_0,\TT p_1,\dots,\TT p_n)$ of $W$
with respect to $\TT p_0,\TT p_1,\dots,\TT p_n$
is an isometric wiring (Definition~\ref{piso}), defined as
\begin{alignat*}{3}
W(\TT p_0,\TT p_1,\dots,\TT p_n) =&& (\start\p\rgt\p x) \p (\TT p_0\p y) &\sflow (\TT c_1\p\lft\p x) \p (\TT p_1\p y) &+ \\
&& (\TT c_1\p\rgt\p x) \p (\TT p_1\p y) &\sflow (\TT c_2\p\lft\p x) \p (\TT p_2\p y) &+ \\
&&& \vdotswithin{\sflow}\\
&& (\TT c_n\p\rgt\p x) \p (\TT p_n\p y) &\sflow (\start\p\lft\p x) \p (\TT p_0\p y)
\end{alignat*}

\end{defi}

In this definition, the position constants $\TT p_0,\TT p_1,\dots,\TT p_n$ can be understood as the
addresses of memory cells holding the symbols $\start, \TT c_1, \dots, \TT c_n$.
This simulates the order naturally present when an input word is written on a tape (where each cell has one or two neighbours)
in a context where the commutative addition cannot implement any form of order:
we henceforth have to \emph{tag} each symbol with a position.

More generally, this
representation of words is to be understood as \emph{dynamic}: we may think of a series of
\emph{movement instructions} from a symbol to the next or the previous for a kind of pointer machine.
This is why each term of the form $\TT c_i \p \dots \p \TT p_i \p \dots$ comes in two distinct versions
using either $\rgt$ or $\lft$ (as \enquote{right/next}, \enquote{left/previous}) setting different responses
to different directions in reading the input word.
Moreover, the general shape of
the wiring is circular, \ie when reaching the end of the word, we return to the position holding $\start$.
This can be pictured as follows:
\begin{center}
\label{dessin}
\begin{tikzpicture}
	\draw[densely dotted, fill=gray!5] (0,2) ellipse (1cm and 0.75cm) node[above=0.3, black] {$\TT p_0$};
    \node (star-r) at (0.6, 2) {$\start \p \rgt$};
    \node (star-l) at (-0.6, 2) {$\start \p \lft$};
    \draw[densely dotted, fill=gray!5] (2,0) ellipse (0.75cm and 1cm) node[right=0.15, black] {$\TT p_1$};
    \node (c1-l) at (2, 0.6) {$\TT c_1 \p \lft$};
    \node (c1-r) at (2, -0.6) {$\TT c_1 \p \rgt$};
	\node at (0,-2) {$\hdots$};
	\draw[densely dotted, fill=gray!5] (-2,0) ellipse (0.75cm and 1cm) node[left=0.15, black] {$\TT p_n$};
    \node (cn-l) at (-2, 0.6) {$\TT c_n \p \lft$};
    \node (cn-r) at (-2, -0.6) {$\TT c_n \p \rgt$};
    \draw[-left to] (1.1, 2.05) to [bend left] (2.05, 1.1); %
    \draw[-left to] (1.95, 1.1) to [bend right] (1.1, 1.95); %
    \draw[-left to] (-1.1, 1.95) to [bend right] (-1.95, 1.1); %
    \draw[-left to] (-2.05, 1.1) to [bend left] (-1.1, 2.05); %
    \draw[-left to] (2.05, -1.1) to [bend left] (0.5, -2.05); %
    \draw[-left to] (0.5, -1.95) to [bend right] (1.95, -1.1); %
    \draw[-left to] (-0.5, -2.05) to [bend left] (-2.05, -1.1); %
    \draw[-left to] (-1.95, -1.1) to [bend right] (-0.5, -1.95); %
\end{tikzpicture} \end{center}

This point of view will be at work in the proof of Theorem~\ref{th_completeness}, where we will show that computations of a particular class of pointer machines can be represented in our context.
In that perspective, the $\dots \p x \p \dots y \p \dots$ part will serve to preserve some information relative to the machine, such as its internal state or the positions of additional pointers.

To identify in which $\ast$-algebra all representations of words live, let us define some notations and sub-algebras.

\begin{defi}\label{def_various_alg}
	We write, with a slight abuse of notation that identifies algebras and sets of symbols generating them,
	\begin{itemize}
	\item $\Sigma_{\lft\rgt}$ the $\ast$-algebra generated by flows of the form
	$\TT s\p\TT d\flow \TT s'\p\TT d'$ with $\TT s,\TT s'\in \Sigma\cup\{\start\}$ and $\TT d,\TT d'\in \{\lft,\rgt\}$
	\item $\TT P$ the $\ast$-algebra generated by flows of the form
	$\TT p\flow \TT p'$ with $\TT p,\TT p'\in \TT P$,
	\item $\TT Q$ the $\ast$-algebra generated by flows of the form,
	$\TT q\flow \TT q'$ with $\TT q,\TT q'$ %
being \emph{state constants}, whose set is denoted $\TT Q$.
	\end{itemize}
	
\end{defi}

\begin{prop}
	Any representation of a word $W(\TT p_0,\TT p_1,\dots,\TT p_n)$ lies in the $\ast$-algebra
	\[\mathcal W=\Sigma_{\lft\rgt}\ptensor\unitalg\ptensor\TT P\tpower1\]
	which we call the \emph{word algebra}. Moreover word representations are concrete wirings.
\end{prop}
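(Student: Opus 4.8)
The plan is to prove both assertions by reducing everything to a single flow and then using that $\C W$ is a $\ast$-algebra. The membership claim amounts to exhibiting, for each summand of $W(\TT p_0,\dots,\TT p_n)$, an explicit factorization as a tensor product whose three factors fall into $\Sigma_{\lft\rgt}$, $\unitalg$ and $\TT P\tpower1$ respectively.

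First I would isolate the shape of one summand. Before symmetrization, every line of Definition~\ref{words} is a flow
\[(\TT s\p\TT d\p x)\p(\TT p\p y)\ \flow\ (\TT s'\p\TT d'\p x)\p(\TT p'\p y),\]
with $\TT s,\TT s'\in\Sigma\cup\{\start\}$, $\TT d,\TT d'\in\{\lft,\rgt\}$ and $\TT p,\TT p'$ position constants. I would then read off the factorization
\[\bigl(\TT s\p\TT d\flow\TT s'\p\TT d'\bigr)\ptensor(x\flow x)\ptensor\bigl(\TT p\p y\flow\TT p'\p y\bigr),\]
and check that the first factor is a defining generator of $\Sigma_{\lft\rgt}$, that the middle factor is exactly $\unit\in\unitalg$, and that the last factor lies in $\TT P\tpower1$. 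For this last point I would invoke Definition~\ref{unbounded}, which gives $\TT P\tpower1=\TT P\ptensor\unitalg$, so that $\TT p\p y\flow\TT p'\p y=(\TT p\flow\TT p')\ptensor(y\flow y)$ is a generator of $\TT P\tpower1$. Since $\Sigma_{\lft\rgt}\ptensor\unitalg\ptensor\TT P\tpower1$ is by Definition~\ref{ptensor} spawned by exactly such tensor products, this single flow is an element of $\C W$.

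From there the full statement follows by the closure properties of a $\ast$-algebra. Using $(t\flow u)^\dagger=u\flow t$, each symmetrized line $u\sflow v=u\flow v+v\flow u$ is the sum of the flow above and its $\dagger$-image; as $\C W$ is stable under $(\cdot)^\dagger$ and under finite sums, the whole wiring $W(\TT p_0,\dots,\TT p_n)$ stays in $\C W$. Concreteness is immediate once one observes that the only coefficients introduced are $1$ (from $\sflow$ and from the outer $+$) and that the $2(n+1)$ summands are pairwise distinct---forward and backward flows are separated by their direction symbols, and the position pairs $(\TT p_{i-1},\TT p_i)$ separate the forward ones---so no coefficient ever adds up past $1$; the wiring is thus concrete in the sense of Definition~\ref{concrete}.

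The step I expect to be the only real obstacle is the factorization itself: one must track the nesting of $\p$ carefully so that the symbol--direction block $\TT s\p\TT d$, the payload variable $x$, and the position block $\TT p\p y$ line up with the three slots of $\Sigma_{\lft\rgt}\ptensor\unitalg\ptensor\TT P\tpower1$, and use $\TT P\tpower1=\TT P\ptensor\unitalg$ to see that the position factor keeps its own trailing variable $y$. The remaining verifications---validity of each factor as a flow (the condition $\var(t)=\var(u)$ holds slot by slot) and pairwise disjointness of the variables across the three slots (empty, $\{x\}$, $\{y\}$) so the tensor products are well defined---are routine.
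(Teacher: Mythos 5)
Your proof is correct and is essentially the paper's own argument: the proposition is stated there without proof, as an immediate unfolding of the definitions, and your verification---factoring each line into $(\TT s\p\TT d\flow\TT s'\p\TT d')\ptensor(x\flow x)\ptensor(\TT p\p y\flow\TT p'\p y)$ with the last factor a generator of $\TT P\tpower1=\TT P\ptensor\unitalg$, then using closure of a $\ast$-algebra under finite sums and $(\cdot)^\dagger$ to absorb the $\sflow$ pairs, and pairwise distinctness of the $2(n+1)$ flows for concreteness---is exactly that unfolding. The bracket alignment you flag as the crux is treated at the same level of rigour as in the paper itself, which systematically identifies the flattened right-associated terms of Definition~\ref{words} (such as $\TT s\p\TT d\p x$, i.e.\ $\TT s\p(\TT d\p x)$) with the bracketing that $\ptensor$ formally produces (such as $(\TT s\p\TT d)\p x$), so no further argument is expected of you there.
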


Let us turn now to the definition of \emph{observations} that will correspond to programs computing
on representations of words.

We give a general notion first, which we will instantiate in Section~\ref{sec_logspace} to get a class
of observations that captures
logarithmic space
computation, based on the representation of permutations over an unbounded tensor presented in
the previous section.

\begin{defi}[observations]\label{def_obs}
Given a $\ast$-algebra $\C A$, an \emph{observation by $\C A$} is an element of $\C O[\C A]^{\concrete}$ where
\[\C O[\C A]=\Sigma_{\lft\rgt}\ptensor\C A\] %
and the $(\cdot)^{\concrete}$ notation refers to the set of concrete (Definition~\ref{concrete})
wirings obtained from $\C O[\C A]$.
Moreover when an observation by $\C A$ happens to be an isometric wiring, we will call it an \emph{isometric observation} by $\C A$.

In case $\C A$ is the whole unification algebra $\ualg$ we call the elements of $\C O[\ualg]^{\concrete}$
simply \emph{observations}.
\end{defi}
 \subsection{Independence from Representations: Normativity}
\label{sec_normativity}
We now study how representations of words and observations interact, leading to a notion of acceptance.
The basic idea is that an observation $\phi$ accepts a word $W$ if the wiring
$\phi W(\TT p_0,\dots,\TT p_n)$ is nilpotent (Definition~\ref{def_nilp}), but we want to make sure that the notion is independent of
the choice of a specific representation of $W$.

This could be enforced at a basic syntactic level: we could forbid the observations to use the position constants,
which are the only source of variability from one representation to the other. %
But we would like to give a more algebraic view of this idea,
to tend to a more abstract vision: this leads to the notion of \emph{normative pair}
introduced by J.-Y.~Girard~\cite{Girard2012}.

\begin{defi}[automorphism]
An \emph{automorphism} of a $\ast$-algebra $\C A$ is an injective linear application $\varphi: \C A\rightarrow\C A$
such that for all~$F,G\in\C A$:
\[\varphi(FG)=\varphi(F)\varphi(G) \quad \text{ and } \quad \varphi(F^\dagger)=\varphi(F)^\dagger\]
\end{defi}

For instance $\varphi(U_1\ptensor U_2)=U_2\ptensor U_1$ defines an automorphism of $\ualg \ptensor\ualg$.

An automorphism $\varphi$ therefore preserves the algebraic properties of elements of $\C A$: in particular,
$\varphi(A)$ is nilpotent if and only if $A$ is.

\begin{nota}
If $\varphi$ is an automorphism of $\C A$ and $\psi$ is an automorphism of $\C B$, we write
$\varphi\ptensor\psi$ the automorphism of $\C A\ptensor\C B$ defined for all $A\in \C A, B\in \C B$ as
\[(\varphi\ptensor\psi)(A\ptensor B)=\varphi(A)\ptensor\psi(B)\]
\end{nota}

Keeping in mind that an automorphism is a transformation that preserves the algebraic
properties, we define the notion of normative pair as a pair of $\ast$-algebras such that an automorphism
of one of them can be extended to act as the identity on the other one.

\begin{defi}[normative pair]
\label{normpair}
A pair $(\C A,\C B)$ of $\ast$-algebras is a \emph{normative pair} whenever any automorphism $\varphi$
of $\C A$ can be extended into an automorphism $\overline\varphi$ of the $\ast$-algebra $\C E$
generated by $\C A \cup \C B$ such that $\overline\varphi(B)=B$ for any $B\in \C B\subseteq\C E$.
\end{defi}

A trivial example would be that of commuting $\C A$ and $\C B$: then any element of $\C E$ can be
written as a sum of $AB$ with $A\in\C A$ and $B\in\C B$, which allows us then to define
$\overline\varphi(AB)=\varphi(A)B$ consistently.
But this case is of little interest since when
$A$, $B$ commute $(AB)^n=A^nB^n$ so that there is no real interaction between $A$ and $B$: they
\enquote{pass through} each other without communicating.

The two following propositions set the basis for a notion of acceptance and rejection independent of the representation of a word, as soon as normative pairs are involved.

\begin{prop}[automorphic representations]
\label{automorphic}
Any two representations $W(\TT p_0,\dots,\TT p_n)$, $W(\TT p'_0,\dots,\TT p'_n)$ of the same word $W$
are automorphic: there is an automorphism $\varphi$ of $\unitalg \ptensor \TT P\tpower 1$ such that
\[(\Id_{\Sigma_{\lft\rgt}}\ptensor\varphi)\big(W(\TT p_0,\dots, \TT p_n)\big)=W(\TT p'_0,\dots, \TT p'_n)\]
\end{prop}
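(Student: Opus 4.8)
The plan is to realise the change of position constants as a relabelling, lifted to an automorphism acting only on the position component of the word algebra. The starting point is the observation that every flow occurring in $W(\TT p_0,\dots,\TT p_n)$ decomposes cleanly along the tensor structure $\mathcal W=\Sigma_{\lft\rgt}\ptensor\unitalg\ptensor\TT P\tpower1$: a summand such as $(\TT c_i\p\rgt\p x)\p(\TT p_i\p y)\flow(\TT c_{i+1}\p\lft\p x)\p(\TT p_{i+1}\p y)$ factors as
\[
(\TT c_i\p\rgt\flow\TT c_{i+1}\p\lft)\ptensor(x\flow x)\ptensor(\TT p_i\p y\flow\TT p_{i+1}\p y),
\]
where the last factor is $(\TT p_i\flow\TT p_{i+1})\ptensor(y\flow y)\in\TT P\ptensor\unitalg=\TT P\tpower1$. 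Thus the only dependence on the chosen sequence of position constants is confined to the $\TT P$-slot, the trailing variable $y$ and the leading variable $x$ being untouched.

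First I would build the relabelling. Since $\TT p_0,\dots,\TT p_n$ and $\TT p'_0,\dots,\TT p'_n$ are two sequences of \emph{distinct} position constants and the set $\TT P$ is infinite, the partial injection $\TT p_i\mapsto\TT p'_i$ extends to a bijection $\pi$ of $\TT P$. I would then lift $\pi$ to a map $\hat\pi$ on the algebra $\TT P$ by setting $\hat\pi(\TT p\flow\TT q)=\pi(\TT p)\flow\pi(\TT q)$ and extending by linearity, and check that $\hat\pi$ is an automorphism: the product $(\TT p\flow\TT q)(\TT r\flow\TT s)$ is defined exactly when the closed constants $\TT q,\TT r$ coincide, and $\pi$ being a bijection, $\pi\TT q=\pi\TT r$ iff $\TT q=\TT r$, so $\hat\pi$ preserves both the (partial) product and the involution, and is injective.

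Finally, reading $\unitalg\ptensor\TT P\tpower1$ as $\unitalg\ptensor\TT P\ptensor\unitalg$, I would set $\varphi:=\Id_{\unitalg}\ptensor\hat\pi\ptensor\Id_{\unitalg}$, which is an automorphism as a tensor of automorphisms (using $(F\ptensor G)(P\ptensor Q)=(FP)\ptensor(GQ)$). Applying $\Id_{\Sigma_{\lft\rgt}}\ptensor\varphi$ to the decomposed summand above leaves the $\Sigma_{\lft\rgt}$- and $\unitalg$-factors fixed and sends the position factor $(\TT p_i\flow\TT p_{i+1})$ to $(\TT p'_i\flow\TT p'_{i+1})$, yielding exactly $(\TT c_i\p\rgt\p x)\p(\TT p'_i\p y)\flow(\TT c_{i+1}\p\lft\p x)\p(\TT p'_{i+1}\p y)$; the same holds for the reverse halves of each $\sflow$ since $\hat\pi$ commutes with $(\cdot)^\dagger$. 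Summing over all summands by linearity gives the desired identity $(\Id_{\Sigma_{\lft\rgt}}\ptensor\varphi)\big(W(\TT p_0,\dots,\TT p_n)\big)=W(\TT p'_0,\dots,\TT p'_n)$.

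The routine part is the verification that $\hat\pi$ is an automorphism; the only point demanding care is the bookkeeping of the tensor decomposition — correctly isolating each position constant in the $\TT P$-slot and confirming that substituting $\TT p'_i$ for $\TT p_i$ is implemented by the automorphism $\varphi$ rather than by an arbitrary substitution on terms. The distinctness of the position constants is precisely what guarantees that the relabelling $\pi$ is a genuine bijection, hence that $\varphi$ is invertible and an automorphism.
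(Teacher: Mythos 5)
Your proof is correct and is essentially the paper's own argument: the paper likewise takes a bijection $f:\TT P\rightarrow\TT P$ with $f(\TT p_i)=\TT p'_i$ and defines $\varphi(x\p v\p y\flow x\p w\p y)=x\p f(v)\p y\flow x\p f(w)\p y$ extended by linearity, which is exactly your $\Id_{\unitalg}\ptensor\hat\pi\ptensor\Id_{\unitalg}$. The only difference is that you spell out the verifications (extending the relabelling to a bijection of $\TT P$, checking the lift is an automorphism, tracking the tensor decomposition of each summand) that the paper's two-line proof leaves implicit.
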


\proof
Consider any bijection $f:\TT P \rightarrow\TT P$ such that $f(\TT p_i)=\TT p'_i$ for all $i$.

Then set $\varphi(x\p v\p y\flow x\p w\p y)=x\p f(v)\p y\flow x\p f(w)\p y$, extended by linearity.
\qed

\begin{prop}[nilpotency and normative pairs]
\label{normative}
Let $(\C A,\C B)$ be a normative pair and $\varphi$ an automorphism of $\C A$.
Let $F\in \Sigma_{\lft\rgt}\ptensor\C A$, $G\in \Sigma_{\lft\rgt}\ptensor\C B$ and $\psi=\Id_{\Sigma_{\lft\rgt}}\ptensor \varphi$.

Then $GF$ is nilpotent if and only if $G\psi(F)$ is nilpotent.
\end{prop}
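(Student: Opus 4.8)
The plan is to use the normative pair hypothesis to produce a \emph{single} automorphism of the ambient algebra that fixes $G$ while sending $F$ to $\psi(F)$; since automorphisms preserve nilpotency, the desired equivalence then drops out without ever having to inspect the interaction between $\C A$ and $\C B$ directly.

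First I would locate the relevant products. Writing $\C E$ for the $\ast$-algebra generated by $\C A\cup\C B$, both $F\in\Sigma_{\lft\rgt}\ptensor\C A$ and $G\in\Sigma_{\lft\rgt}\ptensor\C B$ lie in $\Sigma_{\lft\rgt}\ptensor\C E$, which is closed under products, so $GF$ and $G\psi(F)$ both live there. Next, since $(\C A,\C B)$ is a normative pair, I extend $\varphi$ to an automorphism $\overline\varphi$ of $\C E$ satisfying $\overline\varphi(B)=B$ for all $B\in\C B$, and form $\overline\psi=\Id_{\Sigma_{\lft\rgt}}\ptensor\overline\varphi$, an automorphism of $\Sigma_{\lft\rgt}\ptensor\C E$.

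The two computations I need follow by linearity from the coordinatewise action of a tensor of maps. Writing $G=\sum_i S_i\ptensor B_i$ with $S_i\in\Sigma_{\lft\rgt}$ and $B_i\in\C B$ gives $\overline\psi(G)=\sum_i S_i\ptensor\overline\varphi(B_i)=G$, since $\overline\varphi$ fixes $\C B$; writing $F=\sum_i S_i\ptensor A_i$ with $A_i\in\C A$ gives $\overline\psi(F)=\sum_i S_i\ptensor\varphi(A_i)=\psi(F)$, since $\overline\varphi$ restricts to $\varphi$ on $\C A$. As $\overline\psi$ is a homomorphism, $\overline\psi(GF)=\overline\psi(G)\,\overline\psi(F)=G\psi(F)$, and because any automorphism preserves nilpotency we conclude that $GF$ is nilpotent if and only if $G\psi(F)$ is.

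The step demanding the most care, and the main obstacle, is justifying that $\overline\psi=\Id_{\Sigma_{\lft\rgt}}\ptensor\overline\varphi$ is again an automorphism of the (embedded) tensor product inside $\ualg$ and that it acts on sums of elementary tensors coordinatewise: this is precisely what licenses the two computations above and what makes the whole argument reduce to a clean invocation of the normative pair property together with the nilpotency-invariance of automorphisms.
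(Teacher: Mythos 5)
Your proposal is correct and follows essentially the same route as the paper's own proof: extend $\varphi$ to $\overline\varphi$ via the normative pair property, form $\overline\psi=\Id_{\Sigma_{\lft\rgt}}\ptensor\overline\varphi$, observe that $\overline\psi$ fixes $G$ and sends $F$ to $\psi(F)$, and conclude by the homomorphism property and the fact that the injective map $\overline\psi$ preserves (non-)nilpotency. The paper merely compresses your coordinatewise verification of $\overline\psi(G)=G$ and $\overline\psi(F)=\psi(F)$ into the single chain $(G\psi(F))^n=(\overline\psi(G)\overline\psi(F))^n=\overline\psi((GF)^n)$, so the two arguments are the same in substance.
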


\proof
Let $\overline\varphi$ be the extension of $\varphi$ as in Definition~\ref{normpair}
and $\overline\psi=\Id_{\Sigma_{\lft\rgt}}\ptensor\overline\varphi$.

We have for all $n\neq 0$ that $(G\psi(F))^n=(\overline\psi(G)\overline\psi(F))^n=(\overline\psi(GF))^n =\overline\psi((GF)^n)$.

By injectivity of $\overline\psi$, $(G\psi(F))^n=0$ if and only if $(GF)^n=0$.
\qed

In view of this last proposition, as we know that words are in
$\mathcal W=\Sigma_{\lft\rgt}\ptensor(\unitalg\ptensor\TT P\tpower1)$ and observation by $\C B$ are in
$\C O[\C B]^{\concrete} = (\Sigma_{\lft\rgt}\ptensor \C B)^{\concrete}$,
we understand that it is enough that
$(\unitalg\ptensor\TT P\tpower1,\C B)$ constitutes a normative pair to get the expected result.

\begin{cor}[independence]
\label{indep}
If $(\unitalg\ptensor\TT P\tpower1,\C B)$ is a normative pair, $W$ a word and
$\phi$ an observation by $B$, the product $\phi W(\TT p_0,\dots, \TT p_n)$ is nilpotent for one choice of
$(\TT p_0,\dots,\TT p_n)$ if and only if it is nilpotent for all choices of $(\TT p_0,\dots,\TT p_n)$.
\end{cor}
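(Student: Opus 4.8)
The plan is to reduce the statement to a direct combination of Proposition~\ref{automorphic} and Proposition~\ref{normative}, with the normative pair instantiated as $(\C A,\C B)$ where $\C A=\unitalg\ptensor\TT P\tpower1$. The first thing I would record is that both factors of the product sit in the right tensor components. Since the word algebra is $\mathcal W=\Sigma_{\lft\rgt}\ptensor\C A$, every representation $W(\TT p_0,\dots,\TT p_n)$ is an element of $\Sigma_{\lft\rgt}\ptensor\C A$, so it can play the role of $F$ in Proposition~\ref{normative}; and since $\phi$ is an observation by $\C B$, it lies in $\C O[\C B]^{\concrete}=(\Sigma_{\lft\rgt}\ptensor\C B)^{\concrete}\subseteq\Sigma_{\lft\rgt}\ptensor\C B$, so it plays the role of $G$.

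For the substantive implication I would fix two choices $(\TT p_0,\dots,\TT p_n)$ and $(\TT p'_0,\dots,\TT p'_n)$ of position constants. By Proposition~\ref{automorphic} there is an automorphism $\varphi$ of $\C A=\unitalg\ptensor\TT P\tpower1$ such that, writing $\psi=\Id_{\Sigma_{\lft\rgt}}\ptensor\varphi$, one has $\psi\big(W(\TT p_0,\dots,\TT p_n)\big)=W(\TT p'_0,\dots,\TT p'_n)$. Applying Proposition~\ref{normative} to this $\varphi$, with $F=W(\TT p_0,\dots,\TT p_n)$ and $G=\phi$, yields that $\phi W(\TT p_0,\dots,\TT p_n)$ is nilpotent if and only if $\phi\,\psi(F)=\phi W(\TT p'_0,\dots,\TT p'_n)$ is nilpotent.

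Finally I would conclude by observing that $(\TT p'_0,\dots,\TT p'_n)$ was an arbitrary choice, so nilpotency for one fixed representation propagates to every representation, while the converse direction (\enquote{nilpotent for all choices} implies \enquote{nilpotent for one choice}) is immediate. I do not anticipate any genuine obstacle: the only point requiring care is the bookkeeping that matches $\phi$ and $W(\TT p_0,\dots,\TT p_n)$ against the hypotheses of Proposition~\ref{normative} — namely that $\phi$ sits over $\C B$ and the word over $\C A$ in the shared factor $\Sigma_{\lft\rgt}$ — which is exactly what the normative-pair assumption on $(\unitalg\ptensor\TT P\tpower1,\C B)$ was arranged to guarantee, together with the fact that $\psi$ restricts to the identity on the $\Sigma_{\lft\rgt}$ factor shared by the observation.
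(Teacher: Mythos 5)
Your proposal is correct and matches the paper's intended argument exactly: the paper states this as a corollary whose justification is the paragraph preceding it, which combines Proposition~\ref{automorphic} (to get the automorphism $\varphi$ relating two representations) with Proposition~\ref{normative} (instantiated with $\C A=\unitalg\ptensor\TT P\tpower1$, $F=W(\TT p_0,\dots,\TT p_n)$ and $G=\phi$), just as you do. Your bookkeeping about which tensor factors the word and the observation live in is precisely the point the paper makes before stating the corollary, so there is nothing to add.
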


In the next section, we will consider a particular case of observations, namely where
$\C B=\TT Q \ptensor \C S$, where $\TT Q$ is the algebra generated by the flows of state constants from Definition~\ref{def_various_alg} and $\C S$ is the permutation
algebra from Definition~\ref{def_perm}.

\begin{thm}[normativity]\label{th_norm}
For any $\C A,\C C$ the pair $\big(\unitalg\ptensor\C A\tpower1,\C C \ptensor\C S\big)$ is normative.
In particular, $(\unitalg\ptensor\TT P\tpower1,\TT Q \ptensor \C S)$ is normative.
\end{thm}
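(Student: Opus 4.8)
The plan is to build the extension $\overline\varphi$ explicitly as a \enquote{diagonal} transformation that applies a copy of $\varphi$ to the $\C A$-content of every tensor slot, while leaving the slot carrying $\C C$ and the permutation skeleton untouched. First I would fix the tensor geometry. Writing $\C A\tpower1 = \C A\ptensor\unitalg$, the assignment $a\mapsto \unit\ptensor a\ptensor\unit$ is a $\ast$-isomorphism $\C A\iso\unitalg\ptensor\C A\tpower1$, so any automorphism $\varphi$ of $\unitalg\ptensor\C A\tpower1$ has the form $\varphi(\unit\ptensor a\ptensor\unit)=\unit\ptensor\varphi_0(a)\ptensor\unit$ for a unique automorphism $\varphi_0$ of $\C A$. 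In the ambient unbounded tensor, the first algebra $\unitalg\ptensor\C A\tpower1$ acts nontrivially only on the second slot (its $\C A$-factor), leaving the first slot and all further slots inert; the second algebra $\C C\ptensor\C S$ acts on the first slot through $\C C$ and shuffles the slots $2,3,\dots$ through $\C S$, never touching the first slot. Hence the only obstruction to $\C A$ and $\C C\ptensor\C S$ commuting is that a permutation $[\sigma]$ can move the $\C A$-slot elsewhere.

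Next I would describe $\C E$, the $\ast$-algebra generated by the two. Because the first slot is never permuted, every flow of $\C E$ should factor as $(c\flow c')\ptensor R$ with $c\flow c'\in\C C$ and $R$ a flow in the $\ast$-algebra generated by the single-slot copies of $\C A$ on slots $2,3,\dots$ (obtained by conjugating the slot-$2$ copy by the $[\sigma]$) together with the $[\sigma]$ themselves. Such an $R$ is a \emph{decorated permutation}: a permutation of the slots $\ge 2$ in which each slot carries an element of $\C A$ as decoration. I would then define $\overline\varphi$ on this normal form by fixing $c,c'$ and the underlying permutation and replacing every decoration $a$ by $\varphi_0(a)\in\C A$, extending by linearity. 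In particular a bare permutation flow $[\sigma]$ has all its decorations equal to the unit, and since $\varphi_0(\unit)=\unit$ it is fixed; thus $\overline\varphi$ is the identity on $\C S$, hence on $\C C\ptensor\C S$.

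It then remains to check that $\overline\varphi$ is a well-defined, injective $\ast$-endomorphism of $\C E$ extending $\varphi$. Multiplicativity is the substantive point: under the product the $\C C$-factors multiply in $\C C$, the permutations compose, and the decorations compose in $\C A$ after being transported along the permutations; since $\varphi_0$ is an automorphism of $\C A$ it respects this composition, and since it is applied in the same way to every slot it commutes with the transport induced by the $[\sigma]$ (this is exactly the genericity of the representation of permutations, whose source and target terms are built from variables only). Compatibility with $(\cdot)^\dagger$ follows from $\varphi_0(a^\dagger)=\varphi_0(a)^\dagger$, and injectivity from the fact that the same construction applied to $\varphi_0^{-1}$ yields a two-sided inverse. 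On a first-algebra flow the only decoration sits on slot $2$ and the permutation is trivial, so $\overline\varphi$ restricts to $\varphi$. Specializing to $\C A=\TT P$ and $\C C=\TT Q$ gives that $(\unitalg\ptensor\TT P\tpower1,\TT Q\ptensor\C S)$ is normative.

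The main obstacle I anticipate is the \emph{well-definedness} of the decoration: justifying the normal form $(c\flow c')\ptensor R$ and the claim that the $\C A$-decoration attached to each slot is canonically recoverable from a flow of $\C E$, independently of how it is presented as a product of generators. Once this is secured, the slot-independence of $\varphi_0$ makes it commute with the permutation transport and all the algebraic verifications become routine; the delicate bookkeeping lies entirely in reconciling permutations of different arities, where the trailing variable stands for the unbounded \enquote{rest} of inert slots, each of which $\varphi_0$ must fix.
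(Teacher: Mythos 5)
Your proposal is correct and takes essentially the same route as the paper's proof: extract the automorphism $\varphi_0$ of $\C A$ (the paper's $\psi$), put the generated algebra in the normal form $c\ptensor\sigma F$ with $c\in\C C$, $\sigma\in\C S$, $F\in\C A\tpower\infty$ (your \enquote{decorated permutations}), and extend diagonally by applying $\varphi_0$ to every slot while fixing $c$ and $\sigma$. The well-definedness of this normal-form definition, which you flag as the main obstacle, is in fact passed over silently in the paper's own terse proof, so on that point you are if anything more careful than the original.
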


\proof

Consider $\varphi$ is an automorphism of $\unitalg\ptensor \C A\tpower 1$.
It can be written as
\[\varphi(\unit\ptensor G\ptensor \unit)=\unit\ptensor\psi(G)\ptensor \unit\]
for all $G$, with $\psi$ an automorphism of $\C A$.
Now, the $\ast$-algebra generated by $\unitalg\ptensor\C A\tpower1$ and $\C C \ptensor\C S$ can be identified as
finite sums of elements of the set
\[\set{c \ptensor \sigma F}{c \in \C C, \sigma \in \C S \text{ and } F\in \C A\tpower\infty}\]
This is because for any $\sigma \in \C S$ and $F \in A^{\otimes \infty}$ we have $F\sigma=\sigma F'$ for a suitable $F'\in A^{\otimes \infty}$ obtained by letting $\sigma$ swap components of $F$.

We set for $F=F_1\ptensor\cdots\ptensor F_n\ptensor \unit\in\C A\tpower n$,
\[\tilde\varphi(F)=\psi(F_1)\ptensor\cdots\ptensor \psi(F_n)\ptensor \unit\]
which extends into an automorphism of $\C A\tpower \infty$ by linearity.
Finally, we extend $\tilde\varphi$ to the algebra generated by $\unitalg\ptensor\C A\tpower1$ and $\C C \ptensor\C S$
as $\overline\varphi(c \ptensor\sigma F)=c \ptensor\sigma\tilde\varphi(F)$.
\qed

\begin{rem}
\label{rem-normativity}
This result is likely to be generalized: the permutation algebra \emph{acts} on the infinite tensor product, and through this action
the whole tensor product $\C A\tpower\infty$ is generated by $\C A\tpower1$.
With some adjustment it should be possible to show that given such a situation, one always get a normative
pair.
\end{rem}

We can then define the notion of the language recognized by an observation, thanks to Corollary~\ref{indep} that makes it insensitive to a particular choice of position constants.

\begin{defi}[language of an observation]
\label{def-language-obs}
Let $\phi$ be an observation by $\C B$ satisfying the hypothesis of Corollary~\ref{indep}, \ie $\C B$ is of
the form $\C C \ptensor\C S$.
The \emph{language recognized by $\phi$} is the following set:%
\[\lang(\phi)=\set{W \text{ word over }\Sigma}{\phi W(\TT p_0,\dots,\TT p_n) \text{ nilpotent for any }(\TT p_0,\dots,\TT p_n)}\]
\end{defi}
 
\section{Wirings and Logarithmic Space}
\label{sec_logspace}
Now that we have defined our framework and showed how observations compute, we fix a specific class
of observations %
and study the complexity
of deciding whenever such %
one of its member accepts a word (Section~\ref{subsec_soundness}).
We then show in Section~\ref{subsec_completness} that
the languages recognized correspond exactly, depending on the isometricity of the observation (Definition~\ref{piso}), to the \Logspace or \NLogspace (written \NorDLogspace if we do not want to be specific).

\begin{defi}[\lobs]
\label{def_lobs}
	We consider the $\ast$-algebra $\TT Q \ptensor \C S$ as in Theorem~\ref{th_norm} and call
	\lobs the observations by $\TT Q \ptensor \C S$ (Definition~\ref{def_obs}). More explicitely,
	\lobs{}s are finite sums of flows of the form
	\[(\TT c' \p \TT d' \p \TT q'\flow \TT c\p\TT d\p\TT q)\ptensor[\sigma]\]
	where $\TT c,\TT c'\in\Sigma$, $\TT d,\TT d'\in\IO$, $\TT q,\TT q'\in\TT Q$ and $\sigma$ is a permutation.
\end{defi}

We already shown that the notion of acceptance (Definition~\ref{def-language-obs}) is well-defined since
$(\unitalg\ptensor\TT P\tpower1,\TT Q \ptensor \C S)$ is a normative pair.

Deciding nilpotency in this specific case amounts to build a finite-dimensional vector space where we
can observe the \enquote{relevant computation} taking place in a finitary way (remember that the nilpotency
problem is only semi-decidable in general).
We introduce the notion of \emph{separating space} (Definition~\ref{def-sep-space}) and give a logarithmic space-algorithm based on this notion.

That any \NorDLogspace language, or predicate, can be decided by a \lobs will be proven
thanks to \emph{pointer machines} (Definition~\ref{def-pointer-machines}), a model of computation
designed to be easily encoded.
This model comes as a rephrasing of read-only Turing machines, or more precisely as a modification
of two-way multi-head finite automata, known to capture \NorDLogspace~\cite{Hartmanis1972,Aubert2015Dice}.
Unification will act as a \enquote{hard-wired} way to represent execution, thanks to a dialogue between the representation of the input and the observation.

\subsection{Soundness of Observations}
\label{subsec_soundness}

The aim of this subsection is to prove the following theorem:

\begin{thm}[space soundness]
\label{soundness}
Let $\phi$ be \lobs, its language $\lang(\phi)$ is decidable in \NLogspace.
If moreover $\phi$ is isometric, then $\lang(\phi)$ is decidable in \Logspace.
\end{thm}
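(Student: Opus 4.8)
The plan is to reduce the (in general only semi-decidable) nilpotency of $M := \phi\, W(\TT p_0,\dots,\TT p_n)$ to an acyclicity test on a polynomially large graph, and then read off the two bounds from the shape of that graph. First I would isolate the closed terms on which the iteration of $M$ genuinely acts, the \emph{separating space} (Definition~\ref{def-sep-space}). The point is that these terms have a uniformly bounded shape: inspecting the tensor slots touched by $\phi$, such a term records a symbol in $\Sigma\cup\{\start\}$, a direction in $\IO=\{\lft,\rgt\}$, a state among the finitely many state constants occurring in $\phi$, and a tuple of at most $m$ position constants, where $m$ is the largest arity of a permutation $\sigma$ appearing in $\phi$ (a constant fixed by $\phi$). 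Since the position constants ever produced are among $\TT p_0,\dots,\TT p_n$, there are only $\bigO(n^{m})$ such terms, each describable in $\bigO(\log n)$ bits, with $n=|W|$.

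Next I would present $M$ as the adjacency relation of a directed graph — the \emph{computation graph} — on this finite set of \emph{configurations}. By the compatibility of the product with the action on closed terms, one edge $v\to M(v)$ is computed as $\phi\big(W(v)\big)$: first match $v$ against the $\bigO(n)$ flows of $W$, then apply the constantly many flows of $\phi$, each match being decided — and its \MGU produced — in \Logspace by Theorem~\ref{thm-unif-logspace}, with all intermediate terms of size $\bigO(\log n)$. The heart of the separating-space argument is then the equivalence: $M$ is nilpotent iff its restriction to the separating space is nilpotent iff the computation graph is acyclic (a finite graph whose adjacency matrix is eventually $0$ exactly when it has no directed cycle). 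Granting this, the whole problem becomes graph acyclicity, and the two regimes differ only in the nature of the graph.

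For a general \lobs the computation graph is an arbitrary directed graph on $\bigO(n^{m})$ vertices. Detecting a cycle — equivalently, detecting that $M$ is \emph{not} nilpotent — is in \NLogspace: guess a vertex, then nondeterministically walk forward for at most (number of configurations) steps, storing only the current configuration and a counter, accepting if the start vertex is revisited. Since $\NLogspace=\coNLogspace$ (Immerman–Szelepcsényi), nilpotency, hence $\lang(\phi)$, is in \NLogspace. When $\phi$ is moreover isometric, Lemma~\ref{lem_isom} gives that the actions of $\phi$ and of $W$ are partial injections, so their composite $M$ acts as a partial injection and the computation graph is a \emph{functional} graph, a disjoint union of simple paths and simple cycles. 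Acyclicity is then decided deterministically: iterate over all configurations $v$ (a \Logspace counter), follow the forward orbit of $v$ for at most (number of configurations) steps, and report a cycle iff some orbit returns to its start; storing $v$, the current configuration and the step counter costs $\bigO(\log n)$, and each orbit step is the \Logspace transition above, so the whole test stays in \Logspace.

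The step I expect to be the real obstacle is the separating-space reduction itself, not the complexity bookkeeping. One must prove that the a priori infinite-dimensional iteration of $M$ on $\closedh$ is faithfully mirrored by its action on the finite separating space: that the reachable terms cannot grow (the active tensor arity never exceeds $m$, and no position outside $\TT p_0,\dots,\TT p_n$ is ever created), so the configuration set is closed under $M$, and that neither stray flows nor cancellations among the $\BB C$-coefficients can make the finite computation graph disagree with the genuine nilpotency of $M$ — here the fact that all arising coefficients are nonnegative rules out cancellation. Once closure and the exact equivalence \enquote{$M$ nilpotent $\iff$ computation graph acyclic} are secured, the logarithmic- and nondeterministic-logarithmic-space algorithms above follow by standard reachability arguments.
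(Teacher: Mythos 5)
Your proposal is correct and takes essentially the same route as the paper: your configuration set is exactly the paper's computation space (your $m$ is its $N(\phi)$), your \enquote{nilpotent iff the computation graph is acyclic} equivalence is the paper's separation lemma read through the graph-oriented view the paper itself sketches right after its proof, and both your appeal to Immerman--Szelepcsényi for the general case and your functional-graph, deterministic-orbit argument for the isometric case mirror the paper's algorithm (which iterates the wiring on guessed terms with a counter bounded by the dimension, the absence of choice under isometricity coming from Lemma~\ref{lem_isom}). The differences are purely presentational, including your correct observation that nonnegativity of coefficients rules out cancellation, a point the paper leaves implicit in its \enquote{immediate} separation lemma.
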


The proof is given later on, p.~\pageref{proof_soundness}, but one should notice that the result stands for the complements of these languages, but as $\NLogspace = \coNLogspace$ by the Immerman-Szelepcsényi~\cite{Immerman1988a,Szelepcsenyi1988} theorem, this makes no difference.
Indeed, if one looks closely to the definition of the language of an observation (Definition~\ref{def-language-obs}), one may notice that acceptation rests on the nilpotency of the wiring, which supposes that \emph{all branches of computation ends}: observation looks like a \enquote{universally non-deterministic} model of computation.

This theorem will require the notion of \emph{computation space}: finite dimensional\footnote{We recall that a vector space is finite dimensional if the cardinal of its basis is finite.} subspaces of 
the vector space spawned by closed terms $\closedh$ (Definition~\ref{act_vect}) on which we will be able to observe all the behaviour of certain wirings.
It can be understood as the place where all the relevant computation takes place, which
we call a \emph{separating space}.

The rest of this subsection is devoted to the introduction of this notion of computation space, and to prove that a computation space can be computed from an observation and a word, and is indeed separating.
Finally, we prove that deciding if a computation space is nilpotent\incise{which is equivalent to an observation applied to an input being nilpotent}can be done with logarithmic space, thus proving Theorem~\ref{soundness}.

\begin{defi}[separating space]
\label{def-sep-space}
A subspace $E$ of $\closedh$ is \emph{separating} for a wiring $F \in \closedh$ if $F(E)\subseteq E$ and if $F^k(E)=0$ implies $F^k=0$.
\end{defi}

If we observe the computation of $F$ on $E$, it cannot \enquote{step outside $E$}.
On the other hand, the fact that $F^k(E)=0$ implies $F^k=0$ means that it is enough to
check that a certain iteration of $F$ cancels on the space $E$ to conclude that it cancels
everywhere.

\begin{defi}[computation space]
\label{compspace}
Let $\{\TT p_0,\dots, \TT p_n\}$ be a set of distinct position constants and $\phi$ a \lobs.
Let $N(\phi)$ be the smallest integer and $\TT S(\phi)$ the smallest (finite) subalgebra of $\TT Q$
such that $\phi \in \Sigma_{\lft\rgt}\ptensor\TT S(\phi)\ptensor\C S_{N(\phi)}$:

The \emph{computation space} of $\phi$ associated to
the positions $\TT p_i$, denoted $\comp_\phi (\TT p_0,\dots, \TT p_n)$, is the subspace of $\closedh$
generated by closed terms of the form
\[\TT c\p \TT d \p \TT q \p(a_1\p\cdots\p a_{N(\phi)}\p \start)\]
where $\TT c\in \Sigma \cup \{\start\}$, $\TT d\in\IO$, $\TT q\in \TT S(\phi)$ and $\forall 1 \leqslant i \leqslant N(\phi)$, $a_i\in \{\TT p_0,\dots, \TT p_n\}$.

Denoting $|A|$ the cardinal of $A$, the dimension of $\comp_\phi (\TT p_0,\dots, \TT p_n)$ is
\[(|\Sigma| + 1) \times 2 \times |\TT S(\phi)| \times (n+1)^{N(\phi)}\]
which is polynomial in $n$.
\end{defi}

\begin{lem}[separation]
\label{sep}
For any \lobs $\phi$ and any word $W$, %
$\comp_\phi(\TT p_0,\dots, \TT p_n)$ is separating for the wiring $\phi W(\TT p_0,\dots, \TT p_n)$.
\end{lem}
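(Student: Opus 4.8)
The plan is to verify the two defining conditions of Definition~\ref{def-sep-space} for $F:=\phi W(\TT p_0,\dots,\TT p_n)$ and $E:=\comp_\phi(\TT p_0,\dots,\TT p_n)$: first the invariance $F(E)\subseteq E$, then the separation property ``$F^k(E)=0$ implies $F^k=0$''. Both will follow from a single \emph{shape lemma}, proved by induction on $k$: every flow occurring in $F^k$ with a nonzero (necessarily positive) coefficient has the form
\[
\TT c_1\p\TT d_1\p\TT q_1\p t_1\p\cdots\p t_{N}\p y \ \flow\ \TT c_2\p\TT d_2\p\TT q_2\p s_1\p\cdots\p s_{N}\p y
\]
with $N=N(\phi)$, where $\TT c_i\in\Sigma\cup\{\start\}$, $\TT d_i\in\IO$, $\TT q_i\in\TT S(\phi)$, each $t_i,s_i$ is either a position constant among $\TT p_0,\dots,\TT p_n$ or a variable, and $y$ is a single tail variable sitting at position $N+1$ on \emph{both} sides.

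For the base case $k=1$ I would compute the product of an observation flow $(\TT c'\p\TT d'\p\TT q'\flow\TT c\p\TT d\p\TT q)\ptensor[\sigma]$ with a word flow $(\TT s\p\TT d''\p x)\p(\TT p_i\p y)\flow(\TT s'\p\TT d'''\p x)\p(\TT p_j\p y)$: unifying the input side of the observation with the output side of the word forces the symbol, direction and state prefixes to agree, binds the word variable $x$ to the state constant, matches the single position slot $\TT p_i$ of the word against the first slot $x_{\sigma(1)}$ read by $[\sigma]$, and leaves the remaining $N-1$ slots and the tail untouched, so the result manifestly has the displayed shape. For the inductive step I would compose a flow of $F^k$ with a flow of $F$; the crucial observation is that unifying two terms of the displayed form proceeds slot by slot (constants must coincide, a variable is bound to a constant or another variable) and, since position $N+1$ carries a \emph{bare} variable in both, the two tails are simply identified. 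Hence the most general unifier only ever sends variables to constants or variables, never to compound terms, so applying it preserves the width $N$ and keeps the tail bare, and the composite flow again has the displayed shape.

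Granting the shape lemma, $F(E)\subseteq E$ is immediate: applying a flow of $F$ to a basis term $\TT c\p\TT d\p\TT q\p a_1\p\cdots\p a_N\p\start$ of $E$ matches the tail to $\start$ and each slot variable to some $a_i\in\{\TT p_0,\dots,\TT p_n\}$, so the output has its symbol, direction and state in the prescribed finite sets, its $N$ slots filled by position constants, and its tail equal to $\start$; hence it lies in $E$. For separation I would argue by contraposition. If $F^k\neq0$ then, since all coefficients in $F^k$ are positive integers (products and sums of the coefficient-$1$ flows of $\phi$ and $W$), there is \emph{no cancellation}, so $F^k$ genuinely contains a flow $L\flow M$ as above. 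Grounding $M$ by sending its tail variable to $\start$ and each slot variable to $\TT p_0$ produces a closed term $t\in E$ that is a closed instance of $M$; then $(L\flow M)(t)$ is the matching closed instance of $L$, which is nonzero, and by absence of cancellation this contribution survives in $F^k(t)$. Thus $F^k(t)\neq0$ with $t\in E$, i.e.\ $F^k(E)\neq0$, which is the required contrapositive.

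The main obstacle is the width preservation in the shape lemma: one must guarantee that iterating $F$ never ``digs into'' the unspecified tail, which would blow up the width beyond $N$ and destroy the finite-dimensionality of $E$. This is precisely where the specific form of an \lobs{} is used—the permutation part $[\sigma]\in\C S_{N(\phi)}$ only reshuffles the first $N$ slots and the word rewrites a single slot, both keeping position $N+1$ occupied by a bare variable—so a careful bookkeeping of the unification carried out during each composition is the technical heart of the argument.
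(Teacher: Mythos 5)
Your proof is correct, and it fills in exactly what the paper leaves implicit: the paper's entire proof of Lemma~\ref{sep} is \enquote{Immediate given how $\comp$ was defined}, so your argument is the natural formalization of that claim rather than a divergence from it. The two points you isolate are indeed the only ones where anything needs checking --- that composition of width-$N(\phi)$ flows with a bare common tail variable again yields such flows (so iterates of $\phi W(\TT p_0,\dots,\TT p_n)$ never outgrow the computation space), and that all coefficients stay positive (so no cancellation can hide a nonzero $F^k$ from its action on $\comp_\phi(\TT p_0,\dots,\TT p_n)$) --- and both are handled soundly in your sketch.
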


\proof
Immediate given how $\comp$ was defined.
\qed

Now we proceed to prove Theorem~\ref{soundness}: %
we provide an algorithm deciding the nilpotency of $\phi W(\TT p_0,\dots, \TT p_n)$ in logarithmic
space, based on the above lemma.

\proof \emph{(of Theorem~\ref{soundness})}
\label{proof_soundness}
We define the non-deterministic algorithm below.
It takes as an input a word $W$ of length $n$.
We remark that the observation $\phi$ being a constant, one can compute once and for all $N(\phi)$ and $\TT S(\phi)$.

\begin{multicols}{2}
\begin{algorithmic}[1]
\STATE $D\gets (|\Sigma| + 1) \times 2 \times |\TT S(\phi)| \times (n+1)^{N(\phi)}$
\STATE $C\gets 0$
\STATE pick a term $v\in\comp_\phi(\TT p_0,\dots, \TT p_n)$\label{pick1}
\WHILE{$C\leq D$}
\IF{$(\phi W(\TT p_0,\dots, \TT p_n))(v)=0$} \label{if}
\RETURN ACCEPT
\ENDIF
\STATE pick a term $v' \in (\phi W(\TT p_0,\dots, \TT p_n))(v)$ \label{pick2}
\STATE $v\gets v'$ \label{algo-erase}
\STATE $C\gets C+1$
\ENDWHILE
\RETURN REJECT
\end{algorithmic}
\end{multicols}

All computation paths (the \enquote{pick} at lines~\ref{pick1} and~\ref{pick2} being non-deterministic choices) accept if and only if $(\phi W(\TT p_0,\dots, \TT p_n))^n(\comp_\phi(\TT p_0,\dots, \TT p_n))=0$ for some $n$ lesser or equal to the dimension $D$ of the computation space $\comp_\phi(\TT p_0,\dots, \TT p_n)$.
By Lemma~\ref{sep}, this is equivalent to $\phi W(\TT p_0,\dots, \TT p_n)$ being nilpotent, as the computation space is a separating space for $\phi W(\TT p_0,\dots, \TT p_n)$.

The terms chosen at lines~\ref{pick1} and~\ref{pick2} are representable by an integer of size at most $D$, and we need to store only two such terms at the same time, as one is replaced by the other at line~\ref{algo-erase}, every time we go through the \textbf{while}-loop.
We already mentioned that the dimension $D$ of the computation space is polynomial in the size of the input (Definition~\ref{compspace}).
As $C$ is bounded by $D$, both integers can be stored in a space logarithmic in $n$.

Computing $(\phi W(\TT p_0,\dots, \TT p_n))(v)$ at line~\ref{if} can be performed by matching
$v$ against the list of flows in $W(\TT p_0,\dots, \TT p_n)$ then $\phi$. This can be done within
logarithmic space since all the terms involved are of size at most logarithmic, and matching
can be computed in linear space (remember the discussion below Theorem~\ref{thm-unif-logspace}).

Moreover, if $\phi$ is an isometric wiring, $(\phi W(\TT p_0,\dots, \TT p_n))(v)$ consists of a single term instead of a sum by Lemma~\ref{lem_isom}, and there is therefore no non-deterministic choice to be made at line~\ref{pick2}.
It is then enough to run the algorithm enumerating all terms of $\comp_\phi(\TT p_0,\dots, \TT p_n)$ at line~\ref{pick1} to determine the nilpotency of $\phi W(\TT p_0,\dots, \TT p_n)$.
\qed

We can have a more graph-oriented view of the algorithm.
Picture the elements of $\comp_\phi(\TT p_0,\dots, \TT p_n)$ as vertices of a graph:
they represent all the possible terms the computation could reach.
The wiring induces the edges between those terms: if $u$ is in the image of $v$
then they both belong to $\comp_\phi(\TT p_0,\dots, \TT p_n)$ and
we can draw an edge between them.
The iteration of this procedure gives a set of reachable terms, the trace of all possible computation.
To know if this wiring is nilpotent, one only has to check whether there is a cycle in the graph obtained.
This is a typical \Logspace problem, and that this graph can be built in \Logspace manly rests on
the fact that matching is in \Logspace too. The algorithm above performs both tasks (building the graph
and looking for cycles) at the same time.
\subsection{Completeness: Representing Pointer Machines as Wirings}
\label{subsec_completness}
To prove the converse of Theorem~\ref{soundness}, we will prove that wirings can encode a special kind of read-only multi-head Turing Machine: pointers machines.
The definition of this model will be guided by our understanding of the wirings' way of computing: they do not have the ability to write or to store information, and acceptance will be defined as termination of all paths of computation.

For a survey of this topic, one may consult the first author's thesis~\cite[Chapter~4]{Aubert2013b},
the main novelty of this part of our work is to notice that reversible computation is represented by isometric operators.

\begin{defi}[pointer machine]
\label{def-pointer-machines}
A \emph{pointer machine} over an alphabet $\Sigma$ is a tuple $(N,\TT S,\Delta)$ where
\begin{itemize}
\item $N\neq 0$ is an integer, the \emph{number of pointers},
\item $\TT S$ is a finite set, the \emph{states} of the machine,
\item $\Delta \subseteq (\Sigma\times\IO\times\TT S)\times(\Sigma\times\IO\times\TT S)\times \F S_N$, are the \emph{transitions} of the machine

(we will write the transitions $(\TT c,\TT d,s) \rightarrow (\TT c',\TT d',s') \times \sigma$, for readability).
\end{itemize}
A pointer machine will be called \emph{deterministic} if for any $A \in \Sigma\times \IO\times\TT S$,
there is at most one $B\in \Sigma\times \IO\times\TT S$ and one $\sigma\in \F S_N$ such that $A\rightarrow B \times \sigma\in\Delta$.
In that case we can see $\Delta$ as a partial function, and we say that the pointer machine is \emph{reversible} if $\Delta$ is a partial injection.
\end{defi}

We call the first of the $N$ pointers the \emph{main} pointer, it is the only one that can move (it will be moved by the representation of the integer, as we shall see below).
The other pointers are referred to as the \emph{auxiliary} pointers.
An auxiliary pointer will be able to become the main pointer during the computation thanks to permutations.

\begin{defi}[configuration]
Given the length $n$ of a word $W=\TT c_1\dots \TT c_n$ over $\Sigma$ and a pointer machine $M=(N,\TT S,\Delta)$, a \emph{configuration} $C$ of $(M,n)$ is an element of 
\[\Sigma\times\IO\times \TT S\times\{0,1,\dots,n\}^N.\]
\end{defi}

The element of $\TT S$ is the state of the machine and the element of $\Sigma$ is the symbol 
the main pointer points at.
The element of $\IO$ is the direction of the next move of the main pointer, and the elements of $\{0,1,\dots,n\}^N$ correspond to the positions of the (main and auxiliary) pointers on the input.

As the input tape is considered cyclic with a special symbol marking the beginning of the word (recall Definition~\ref{words}), the pointer positions are \emph{modulo} $n+1$ integers for an input word of length $n$.

\begin{defi}[transition]
Let $W$ be a word of length $n$ and $M=(N,\TT S,\Delta)$ be a pointer machine.
A \emph{transition} of $M$ on input $W$ is a triple of configurations
\[\TT c,\TT d,s,(p_1,\dots,p_N) \trans{\move}{} \TT c',\overline{\TT d},s,(p_1',\dots,p_N') \trans{\swap}{} \TT c'',\TT d',s',(p_{\sigma(1)}',\dots,p_{\sigma(N)}') \]
such that
\begin{enumerate}
\item if $\TT d\in\IO$, $\overline{\TT d}$ is the other element of $\IO$,
\item $p_1'=p_1-1$ if $\TT d=\lft$ and $p_1'=p_1+1$ if $\TT d=\rgt$,
\item $p_i'=p_i$ for $i\neq 1$,
\item $\TT c$ (resp. $\TT c'$) is the symbol at position $p_1$ (resp. $p_1'$) of $W$, \label{condition}
\item and $(\TT c',\overline{\TT d},s) \rightarrow (\TT c'',\TT d',s') \times \sigma$ belongs to $\Delta$.
\end{enumerate}
\end{defi}
There is no constraint on $c''$, but every time this value differs from the symbol pointed by $p_{\sigma(1)}'$, the computation will halt on the next \move phase, because there is a mismatch between the value that is supposed to have been read and the actual symbol
of $W$ stored at this position, and that would contradict the first part of item~\ref{condition}.

In terms of wirings, the \move phase corresponds to the application of the representation of the word, whereas the \swap phase corresponds to the application of the observation.
One way to present it is to draw attention on the fact that, in the drawing page~\pageref{dessin}, there is no arrow \emph{inside} the \enquote{domain} of a position constant.
Stated differently, \emph{observations only} can make the computation evolves from a direction constant to another, and \emph{representation of the word only} can update the constant position.

\begin{defi}[acceptance]
\label{translate}
A pointer machine $M$ accepts a word $W$ of length $n$ if %
for all configuration $C_0$ of $(M,n)$,
all sequences of transitions
\[\big(C_0\trans{\move}{} C_0' \trans{\swap}{} C''_0 = C_{1} \trans{\move}{} \cdots \trans{\swap}{} C''_{k-1} = C_{k} \big)\]
are finite.
We write $\lang(M)$ the set of words accepted by $M$.
\end{defi}

This means informally that a pointer machine accepts a word if it cannot ever loop, from whatever configuration it starts from.
That a lot of paths of computation stop and accept \enquote{wrongly} is no worry, since only rejection is meaningful: our pointer machines compute in a \enquote{universally non-deterministic} way, to stick to the acceptance condition of wirings, nilpotency.

\begin{prop}[space and pointer machines]
\label{pointerl}
If $L\in \NLogspace$, then there exist a pointer machine $M$ such that $\lang(M)=L$.
Moreover, if $L \in \Logspace$ then $M$ can be chosen to be reversible.
\end{prop}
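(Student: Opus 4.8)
The plan is to establish the two statements by a standard simulation argument, encoding the configurations and transitions of a known \NorDLogspace machine model into pointer machines. First I would recall the classical fact (cited in the excerpt via~\cite{Hartmanis1972,Aubert2015Dice}) that two-way multi-head finite automata capture exactly \NLogspace in their nondeterministic form and \Logspace in their deterministic form. Since pointer machines are explicitly introduced as a rephrasing of these automata, the core of the argument is to check that each head of the multi-head automaton can be simulated by one of the $N$ pointers, and that the finite control (state set) of the automaton maps to the state set $\TT S$ of the pointer machine. The subtlety is that a pointer machine moves only its \emph{main} pointer during the \move phase, promoting auxiliary pointers to main status only through the permutation $\sigma$ in the \swap phase; so a multi-head transition that moves several heads at once must be decomposed into a sequence of pointer-machine transitions, each moving one pointer and then cycling the pointers via a permutation so that every head in turn becomes the main pointer and advances.

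The key steps, in order, are as follows. Given $L\in\NLogspace$, fix a two-way $k$-head automaton $\C M$ recognizing $L$; set $N=k$ (or $N$ slightly larger if extra bookkeeping pointers are needed) and let $\TT S$ encode the finite control of $\C M$ together with a counter (a finite cyclic index in $\{1,\dots,N\}$) recording which of the $k$ heads is currently being serviced. I would then define $\Delta$ so that one \emph{macro-step} of $\C M$ is realized by $N$ consecutive transitions of $M$: at each transition the main pointer reads its symbol, the state records the read symbol, the pointer moves in the prescribed direction, and $\sigma$ rotates the pointers so that the next head becomes main; after a full rotation the state applies $\C M$'s transition function. Crucially, acceptance must be matched: a pointer machine accepts when \emph{all} computation paths are finite (no looping), whereas the automaton accepts by reaching an accepting configuration. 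I would reconcile this by arranging that $M$ loops forever precisely on the rejecting/non-halting computations of $\C M$ and halts on the accepting ones — exploiting the ``universally non-deterministic'' acceptance condition and the fact that, over finitely many configurations, absence of an accepting halt is equivalent to the existence of an infinite path. Here $\NLogspace=\coNLogspace$ (Immerman--Szelepcsényi) is what lets us pass freely between ``some path accepts'' and ``all paths fail to loop'' without leaving the class.

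For the deterministic refinement, if $L\in\Logspace$ I would start from a deterministic machine, making $\Delta$ a partial function; the real work is to obtain \emph{reversibility}, i.e.\ to make $\Delta$ a partial \emph{injection}. I would invoke the classical result that deterministic logarithmic-space (equivalently, deterministic multi-head automata) can be made reversible — Lange--McKenzie--Tapp-style reversible simulation of deterministic space-bounded computation — and then verify that the reversible simulation respects the pointer-machine format: each transition must have a unique predecessor, which is arranged by having each transition record enough information (the previous read symbol, stored in the $\Sigma$-component $\TT c$ of the source triple, and the inverse permutation $\sigma^{-1}$ being available as a transition) to invert it. The main obstacle I expect is precisely this reversibility step: ensuring that the head-rotation-by-permutation encoding, together with the state bookkeeping, yields transitions with no two distinct sources mapping to the same target, while still faithfully simulating the deterministic computation. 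The nondeterministic part is a routine translation; the delicate point is threading the reversible simulation through the one-pointer-moves-at-a-time constraint so that $\Delta$ remains injective and no information about the previous configuration is irrecoverably erased.
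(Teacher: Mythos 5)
Your overall route is the same as the paper's: pass through two-way multi-head finite automata (Hartmanis), decompose simultaneous head moves into one-pointer-at-a-time transitions with the extra read symbols and last directions cached in the finite control, use Immerman--Szelepcsényi to align the automaton's existential acceptance with the pointer machine's universal (\enquote{all paths halt}) acceptance, and invoke Lange--McKenzie--Tapp for the reversible refinement. The ingredients and their roles match the paper's proof almost one for one.

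There is, however, one genuine gap, and it is exactly the point the paper flags as requiring care: \emph{how} to make $M$ \enquote{loop forever precisely on the rejecting computations}. Acceptance of a pointer machine (Definition~\ref{translate}) quantifies over \emph{all} configurations $C_0$ of $(M,n)$, not only those reachable from an initial one. Consequently, any input-independent loop written into $\Delta$ --- e.g.\ a rejecting sink state whose transitions keep the machine in that state --- yields an infinite transition sequence starting from any configuration in that state, on \emph{every} input; such a machine rejects all words. Your proposal leaves the looping mechanism unspecified, and its naive implementation fails for this reason. The paper's repair is to implement looping as a \emph{re-initialization}: the would-be looping transition sends the machine back to an initial state with all pointers at position $\TT p_0$ reading $\start$, so that an infinite path exists if and only if the genuine computation on that very input fails to halt --- looping becomes \enquote{check forever that you do not halt on this input}. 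With that repair your plan goes through. A minor further difference: the paper secures reversibility by taking a \emph{reversible Turing machine} (Lange--McKenzie--Tapp) before converting to automata and then to a pointer machine, rather than reversibilizing the deterministic pointer machine after the fact as you propose; this ordering largely spares the delicate injectivity bookkeeping that you correctly identify as the main obstacle in your version.
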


\proof
We proceed with a step-by-step transformation from Turing machines deciding $L$ to pointer machines deciding the same language, through automata.

If $L \in \NLogspace$, then by the Immerman-Szelepcsényi~\cite{Immerman1988a,Szelepcsenyi1988} theorem there exists a Turing machine in $\coNLogspace$ that decides it.
If $L \in \Logspace$, as deterministic and reversible logarithmic space coincides~\cite{Lange2000}, we take a reversible Turing machine that decides it.

It is then possible to design two (non-)deterministic multi-head finite automata that recognize the same languages~\cite{Hartmanis1972}.
Those automata are read-only version of the Turing machines that are closer to pointer machines, but still differ on some features.
We now prove that they can be re-arranged to fit the definition of pointer machines.

First, we modify their accepting condition to be \enquote{halt} and their rejection to be \enquote{loop}: it is always possible to adjust automata so that no transition is defined from an accepting state, and so that rejection makes the automata never halts.
The introduction of loops requires some care: an \enquote{in-place loop} would prevent the pointer machine to ever stop no matter the input, so we implement loops thanks to a \enquote{re-initialization} (\enquote{go back to an initial state, with all your heads at position $\TT p_0$, reading $\star$}).
Looping is in that setting more of a \enquote{check forever that you do not halt on that input}~\cite[Section~5.1]{Aubert2016IaC}\cite[Section~6.2.3]{Aubert2016mscs}.
In the non-deterministic case, this amounts to define acceptance as \enquote{all branches of computation halt}, and rejection as \enquote{at least one branch never halts}.

Then, we transform the transition function so that at most one head moves at each transition.
But this is not enough: the transition function of our pointer machines reads only one symbol at a time, the one pointed by the main pointer.
We operate a sort of currying to the transition function of the automata: it reads only one symbol at a time, and the other symbols that were read by the other heads as well as the direction of their last move are encoded in the states.

Finally, we re-arrange the automata so that swapping the heads and moving them on the input are two different phases.
We obtained a pointer machine.
\qed

The first author recently proposed a recollection of the classical characterizations of complexity classes by automaton~\cite{Aubert2015Dice}.
Once the basics tricks of the transformation of a $\NorDLogspace$-Turing machine into a two-ways multi-head automata are known, 
it becomes easy to use some classical theorems to get a pointer machine.
The pointer machine obtained has more pointers than the automata had heads, and the number of state grew violently, but independently from the input, and without losing computational power.

As our pointer machines are designed to be easily simulated by wirings, we get the expected result almost for free.

\begin{thm}[space completeness]
\label{th_completeness}
If $L\in \NLogspace$, then there exist a \lobs $\phi$ such that $\lang(\phi)=L$.
Moreover, if $L \in \Logspace$ then $\phi$ can be chosen isometric.
\end{thm}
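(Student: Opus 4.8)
The plan is to read $\phi$ off directly from a pointer machine. Given $L\in\NLogspace$, Proposition~\ref{pointerl} yields a pointer machine $M=(N,\TT S,\Delta)$ with $\lang(M)=L$, chosen reversible when $L\in\Logspace$. It then suffices to manufacture from $M$ an \lobs $\phi$ with $\lang(\phi)=\lang(M)$ and to check that reversibility of $M$ forces $\phi$ to be isometric; together with Theorem~\ref{soundness} this gives the announced characterization.

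First I would fix an injection $s\mapsto\TT q_s$ of the states $\TT S$ into the state constants $\TT Q$ and translate each transition $(\TT c,\TT d,s)\rightarrow(\TT c',\TT d',s')\times\sigma$ of $\Delta$ into the flow $(\TT c'\p\TT d'\p\TT q_{s'}\flow\TT c\p\TT d\p\TT q_s)\ptensor[\sigma]$, taking $\phi$ to be the sum over $\Delta$ of these flows. By Definition~\ref{def_lobs} this is an \lobs, and by construction $N(\phi)=N$ while $\TT S(\phi)$ is the subalgebra generated by the $\TT q_s$ (Definition~\ref{compspace}). The only point requiring attention here is the orientation of the direction labels: the transition relation consumes $\overline{\TT d}$ while the \move phase is exactly what flips $\TT d$ into $\overline{\TT d}$, so the conventions of the word representation (Definition~\ref{words}) and of $\Delta$ must be lined up.

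The heart of the argument is a step-by-step simulation. I would set up the bijection sending a configuration $(\TT c,\TT d,s,(p_1,\dots,p_N))$ of $(M,n)$ to the closed term $\TT c\p\TT d\p\TT q_s\p(\TT p_{p_1}\p\cdots\p\TT p_{p_N}\p\start)$; these are precisely the generators of $\comp_\phi(\TT p_0,\dots,\TT p_n)$. Since the action is compatible with the product, $l(k(t))=(lk)(t)$, evaluating $\phi W(\TT p_0,\dots,\TT p_n)$ on such a term first runs $W(\TT p_0,\dots,\TT p_n)$ --- which, acting on the symbol-direction factor and on the main-pointer slot through $\TT P\tpower1$, rewrites the symbol, flips the direction and moves the first pointer, \ie performs the \move phase --- and then runs $\phi$, which fires a transition of $\Delta$ matching the current symbol, direction and state and lets $[\sigma]$ permute the $N$ pointer slots, \ie performs the \swap phase. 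Thus one application of $\phi W(\TT p_0,\dots,\TT p_n)$ realizes exactly one transition of $M$, a mismatch of symbols producing $0$, which corresponds to the machine halting.

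It remains to convert the simulation into the equality of languages. A generator of $\comp_\phi(\TT p_0,\dots,\TT p_n)$ admits an infinite orbit under $\phi W(\TT p_0,\dots,\TT p_n)$ if and only if $M$ admits an infinite sequence of transitions from the associated configuration; since $\comp_\phi(\TT p_0,\dots,\TT p_n)$ is finite dimensional and separating for $\phi W(\TT p_0,\dots,\TT p_n)$ (Lemma~\ref{sep}), the wiring is nilpotent if and only if no such orbit exists, that is, if and only if every sequence of transitions of $M$ on $W$ is finite --- exactly the acceptance condition of $M$. Hence $\lang(\phi)=\lang(M)=L$. Finally, if $M$ is reversible then $\Delta$ is a partial injection: determinism makes the right members $\TT c\p\TT d\p\TT q_s$ of the flows pairwise disjoint and injectivity makes the left members $\TT c'\p\TT d'\p\TT q_{s'}$ pairwise disjoint, so $\phi$ is isometric (Definition~\ref{piso}). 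The main obstacle is the bookkeeping in this step-by-step correspondence: getting the direction flip and the permutation of pointer slots to line up so that one wiring application equals one whole transition (both the \move phase and the \swap phase) rather than a half-step, and checking that failed unifications correspond precisely to halting.
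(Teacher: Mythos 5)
Your proposal is correct and follows essentially the same route as the paper's own proof: the same appeal to Proposition~\ref{pointerl}, the identical translation of each transition $(\TT c,\TT d,s)\rightarrow(\TT c',\TT d',s')\times\sigma$ into the flow $(\TT c'\p\TT d'\p s'\flow \TT c\p\TT d\p s)\ptensor[\sigma]$ summed over $\Delta$, the same step-by-step simulation claim, and the same correspondence between reversibility of $M$ and isometricity of the resulting wiring. The only difference is one of detail: you make explicit the configuration-to-term dictionary, the equivalence of nilpotency with finiteness of all transition sequences via Lemma~\ref{sep}, and the disjointness argument for isometricity, all of which the paper compresses into \enquote{one can easily check}.
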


\proof
There exists a pointer machine $M=(N,\TT S,\Delta)$ such that $\lang(M) = L$ by Proposition~\ref{pointerl}.
We associate to the set of states $\TT S$ a set of constants that we still write $\TT S$.
To any element $D=(\TT c,\TT d,s) \rightarrow (\TT c',\TT d',s') \times \sigma$ of $\Delta$ we associate the flow 
\[[D]=(\TT c'\p\TT d'\p s' \flow \TT c\p\TT d\p s) \ptensor [\sigma]\]
which belongs to $\Sigma_{\lft\rgt}\ptensor\TT Q\ptensor\C S_n$ and we define the \lobs
$[M]$ as $\displaystyle\sum_{D\in \Delta} [D]$.

One can easily check that this translation preserves the language recognized (there is even a step by step simulation of the computation on the word $W$ by the wiring $[M]W(\TT p_0,\dots, \TT p_n)$) and relates reversibility with isometricity: in fact, $M$ is reversible if and only if $[M]$ is an isometric wiring.
Then, if $L\in \Logspace$, $M$ is deterministic and can always be chosen to be reversible~\cite{Lange2000}.
\qed

\section*{Conclusion}
\subsection*{Related Works}
\label{related-works}
The idea to consider the \GofI representation of integers with implicit complexity perspectives is originally due to Girard~\cite{Girard2012}, where one of his motivation was to prove that no representation of the integers was \enquote{more standard} than any other (the \enquote{normativity} theorem). %
This approach diverges from the usual complexity results coming from linear logic, that entail a bound on complexity by restricting programs thanks to a type system.
In this perspective, limitation on the computational power of observations (representations of programs) comes from \emph{algebraic restrictions}.

A first series of work~\cite{%
Seiller2012,%
Aubert2013b,%
Aubert2016mscs,%
Aubert2016IaC%
} deepened those intuitions by making formal the interpretation of proofs of linear logic in the hyperfinite factor, a type II$_1$ von Neuman Algebra.
Thanks to the representation of infinite operators by matrices, it was proven~\cite{%
Seiller2012,%
Aubert2016mscs%
} that representations of programs in a specific sub-algebra were characterizing \NLogspace.
Later on, an additional restriction on the observations, phrased in terms of norm, was proven~\cite{Aubert2013b,%
Aubert2016IaC%
} to characterize \Logspace.
As an additional result, it was also discovered that the observations' mechanism of computation was deeply related to automata theory~\cite{Aubert2013b,%
Aubert2016IaC%
}.

The present work and its previous version~\cite{Aubert2014%
} constitute a bridge between this algebraic setting and a more syntactical one that followed.
It still uses ad-hoc \enquote{pointer machines} and a full algebra to describe the computation of observations.
In that perspective, normativity is still seen as mathematical feature, i.e.\ the existence of automorphisms to switch from a representation to another without affecting observations, whereas it is a plain $\alpha$-conversion in the following works.

Two changes in the perspective appeared later on: first, this whole construction could be rephrased in the latest formulation of \GofI~\cite{Girard2013}, which fully uses first-order terms and unification, or matching, to represent linear logic and its execution procedure.
This more syntactical presentation allows to isolate restriction on terms as syntactical conditions: \emph{balanced}~\cite{Aubert2014b%
} and \emph{unary}~\cite{Aubert2016fossacs%
} flows were proven to characterize respectively \NorDLogspace and \Ptime.
Those innovative limits imposed on wiring were discovered thanks to a careful attention paid to automata theory, which is the second change in the perspective.
The rephrasing of the \emph{memoization technique}~\cite{Cook1971}\incise{that was originally invented to prove the \Ptime-soundness of pushdown automata}applied to flows permitted to get the first time-bounded characterization of a complexity class in a \GofI setting.
Those works benefited from previous characterizations in terms of \emph{unification algebra}~\cite{Baillot2001}, and constitutes a modern rephrasing of this work.
The algebraic framework is lighter, for it uses semi-ring~\cite{Bagnol2014} rather than full algebras.

As a by-product, this series of works is now closer to logic programming~\cite{Bagnol2014,%
Aubert2014b,%
Aubert2016fossacs%
} and pretends to highlight with new perspectives this subject.

\subsection*{Future Directions}
We built an algebra endowed with an evaluation mechanism relying on unification of first-order terms, that allows to seamlessly represent the execution of programs.
Taking as a guiding intuition the functional representation of data as functions, we took the Curry-Howard interpretation of \Lterms as proofs to divide our algebra between inputs and observations.
We separated them in two different sub-algebras that communicate \enquote{just enough}: the input cannot interfere with the observation, the observation is insensitive to the choice of the representation (this is the normativity property), and yet they can represent decision of predicates.
Using a specific sub-space that represents all possible computations, the computation space, we proved that deciding the outcome of the interaction between an observation and its input was decidable in \NorDLogspace.
This models, thanks to the representation of permutations and unbounded tensor product, has enough computational power to represent \NorDLogspace.
To prove it, we had to pay an extra attention to the peculiarities of this model: it computes as a read-only model, whose heads read and move in a restricted way, and who accepts by halting.
It was nevertheless possible to introduce pointer machines, mid-way between observations and two-ways multi-head automata, and to prove that observations could simulate this \NorDLogspace-model of computation.

The language of the unification algebra gives a twofold point of view on computation, either through algebraic structures or pointer machines.
We may therefore start exploring possible variations of the construction, combining intuitions from both worlds.

The algebraic setting allows for a number of modifications whose computational meaning is still unclear.
We considered only the computational features provided by concrete wirings, but 
one could imagine that negative coefficients would provide a mechanism to interact \enquote{at distance} between branches of computation.
That may offer the opportunity to represent parallel computation with a mechanism of synchronization, a branch of computation being able to get cancelled by another one.
It is also worth mentioning that matrix computation is well-known to relate closely to parallel computation: observations could be evaluated in a parallel setting, providing complexity-bound on time rather than on space.

Pointer machines relate closely to automata theory, which is a vivid research field that should be inspiring.
Apart from the \Ptime-characterization provided by pushdown automata that was already explored~\cite{Bagnol2014,%
Aubert2016fossacs%
}, relations between our setting and pushdown systems, tree automata or asynchronous automata definitely ought to be studied.
This angle could also provide intuitions to tackle the switching from complexity of
predicates to complexity of functions, using transducers instead of automata.

\subsection*{Acknowledgement}
The authors would like to thanks Jean-Yves Girard for inspiring hints he gave us during the writing of this article.
They are also deeply grateful to Paolo Pistone and Thomas Seiller, with whom we pushed forward this promising line of work.
Discussions with them modified our way of presenting this work, and we are much obliged.
Finally, we would like to thank the reviewers of the conference and journal versions of this work for their insightful comments.

\bibliographystyle{alpha}	%

\end{document}